\definecolor{myred}{rgb}{1,0.1,0.2}
\newtheorem{thm}{Theorem}[section]
\newcommand{\Ind}{\mathbf{1}}
\newcommand{\E}{\operatorname{E}}
\newcommand{\Var}{\operatorname{Var}}
\newcommand{\trace}{\operatorname{trace}}
\renewcommand{\P}{\operatorname{P}}
\renewcommand{\mathbf}{\boldsymbol}
\newcommand{\argmin}{\mathop{\mathrm{argmin}}}
\renewcommand{\bf}{\bfseries}
\renewcommand{\tilde}[1]{\widetilde{#1}}
\renewcommand{\hat}[1]{\widehat{#1}}
\begin{document}

\title{Robust Model Selection in Generalized Linear Models
}
\author{Samuel M\"uller\\
        School of Mathematics and Statistics \\
        University of Western Australia, Crawley, WA 6009, Australia\\
        \&\\
        A. H.\ Welsh \\
        Centre for Mathematics and its Applications \\
        Australian National University, Canberra, ACT 0200, Australia }

\date{\today}

\maketitle

\abstract{In this paper, we extend to generalized linear models (including logistic and other binary regression models, Poisson regression and gamma regression models) the robust model selection methodology developed by M\"uller and Welsh (2005) for linear regression models.  As in M\"uller and Welsh (2005), we combine a robust penalized measure of fit to the sample with a robust measure of out of sample predictive ability which is estimated using a post-stratified m-out-of-n bootstrap.  A key idea is that the method can be used to compare different estimators (robust and nonrobust) as well as different models. Even when specialized back to linear regression models, the methodology presented in this paper improves on that of M\"uller and Welsh (2005).  In particular, we use a new bias-adjusted bootstrap estimator which avoids the need to centre the explanatory variables and to include an intercept in every model.  We also use more sophisticated arguments than M\"uller and Welsh (2005) to establish an essential monotonicity condition.}

\bigskip
{\it Keywords:} bootstrap model selection, generalized linear models, paired bootstrap, robust estimation, robust model selection, stratified bootstrap

\thispagestyle{empty}

\newpage
\pagenumbering{arabic}
\setcounter{page}{1}

\section{Introduction} \label{sec: Introduction}
% Literature for model selection for linear regression models
Model selection is fundamental to the practical application of statistics and there is a substantial literature on the selection of linear regression models.  A growing part of this literature is concerned with robust approaches to selecting linear regression models: see M\"uller and Welsh (2005) for references. The literature on the selection of generalized linear models (GLM; McCullagh and Nelder, 1989) and the related marginal models fitted by generalized estimating equations (GEE; Liang and Zeger, 1986) -- though both are widely used -- is much smaller and has only recently incorporated robustness considerations.  Hurvich and Tsai (1995) and Pan (2001) developed Akaike information criterion (AIC) like criteria based on the quasi-likelihood, Cantoni, Mills Flemming, and Ronchetti (2005) presented a generalized version of Mallows' $C_{p}$, and Pan and Le (2001) and Cantoni et al.\ (2007) presented approaches based on the bootstrap and cross-validation, respectively.  Our purpose in this paper is to generalize the robust bootstrap model selection criterion of M\"uller and Welsh (2005) to generalized linear models.

% Difference between model selection for linear regression and generalized linear regression models
The extension of the methodology of M\"uller and Welsh (2005) from linear regression to generalized linear models is less straightforward than we expected and, as a result, the present paper differs from M\"uller and Welsh (2005) in two important respects.  First, the bias-adjusted $m$-out-of-$n$ bootstrap estimator $\hat{\beta}^{c*}_{\alpha,m} - \E_*(\hat{\beta}^{c*}_{\alpha,m} - \hat{\beta}_{\alpha}^c)$ rather than the $m$-out-of-$n$ bootstrap estimator $\hat{\beta}^{c*}_{\alpha,m}$ is used in estimating the expected prediction loss $M_n^{(2)}(\alpha)$ (definitions are given in Section 2).  As discussed in more detail in Section 3.2, this achieves the same purpose but avoids the centering of the explanatory variables and the requirement that we include an intercept in every model used in M\"uller and Welsh (2005).  Second, we present a simpler, more general method than that used in M\"uller and Welsh (2005) for showing that the consistency result applies to particular robust estimators of the regression parameter.  As discussed in Section 3.3, we use generalized inverse matrices to decompose the asymptotic variance of the estimator into terms which are easier to handle, write the critical trace term as a simple sum and then show that the terms in this sum have the required properties.  Both of these changes were necessitated by the more complicated structure of generalized linear models but they also apply to regression models where they represent improvements to the methodology of M\"uller and Welsh (2005).

% generalized linear regression model notation
Suppose that we have $n$ independent observations $y=(y_1,\ldots,y_n)^T$ and an $n \times p$ matrix $X$ whose columns we index by $\{1,\ldots,p\}$.  
Let $\alpha$ denote any subset of $p_{\alpha}$ distinct elements from $\{1,\ldots,p\}$ and let $X_{\alpha}$ denote the $n \times p_{\alpha}$ matrix with columns given by the columns of $X$ whose indices appear in $\alpha$.  Let $x_{\alpha i}^T$ denote the $i$th row of $X_{\alpha}$.  
Then a generalized linear regression model $\alpha$ for the relationship between the response variable $y$ and explanatory variables $X$ is specified by
\begin{equation}
\E y_{i} = h(\eta_i),   \,\,\,\, \Var y_{i} = \sigma^{2} v^{2}(\eta_{i}) \mbox{ with } \eta_{i} = x_{\alpha i}^T \beta_{\alpha}, \,\,\,\, i = 1,\ldots, n, \label{eqn:model}
\end{equation}
where $\beta_{\alpha}$ is an unknown $p_{\alpha}$-vector of regression parameters.  
% centering of explanatory variables and Mallows type weighting
Here $h$ is the inverse of the usual link function and, for simplicity, we have reduced notation by absorbing $h$ into the variance function $v$.  Both $h$ and $v$ are assumed known.
Let $\mathcal{A}$ denote a set of generalized linear regression models for the relationship between $y$ and $X$.   The purpose of model selection is to choose one or more models $\alpha$ from $\mathcal{A}$ with specified desirable properties.

Our perspective on model selection is that a useful model should 
(i) parsimoniously describe the relationship between the sample data $y$ and $X$ and
(ii) be able to predict independent new observations.
The ability to parsimoniously describe the relationship between the sample data can be measured by applying a penalised loss function to the observed residuals and we use the expected variance-weighted prediction loss to measure the ability to predict new observations.  In addition, we
encourage the consideration of different types of estimator of each of the models.  Possible estimators include the nonrobust maximum likelihood (see K\"unsch, Stefanski, and Carroll, 1989; Cantoni and Ronchetti, 2001; Ruckstuhl and Welsh, 2001) and the maximum quasi--likelihood estimators (see McCullagh and Nelder, 1989)  and the robust estimators of Preisser and Qaqish (1999), Cantoni and Ronchetti (2001), and Cantoni (2004).   The Cantoni and Ronchetti (2001) estimator is described in Section \ref{sec: examples and counterexamples}.

% Outline of the paper
We define a class of robust model selection criteria in Section 2, present our theoretical results in Section 3, report the results of a simulation study in Section 4, present a real data example in Section 5, and conclude with a short discussion and some brief remarks in Section 6.

\section{Robust model selection criterion}

Let $\hat{\beta}_{\alpha}^c$ denote an estimator of type $c \in \mathcal{C}$ of $\beta_{\alpha}$ under (\ref{eqn:model}), let $\sigma$ be a scale parameter, let $\rho$ be a nonnegative loss function, let $\delta$ be a specified function of the sample size $n$, let $\sigma$ denote a measure of spread of the data, and let $\tilde y$ be a vector of future observations at $X$ which are independent of $y$.  Then, we choose models $\alpha$ from a set $\mathcal{A}$ for which the criterion function
\begin{eqnarray}
M(\alpha) & = &\frac{\sigma^{2}}{n} \bigg\{
               \E\sum_{i=1}^n w_{\alpha i}\rho[\{y_i-h(x_{\alpha i}^T\hat{\beta}_{\alpha}^c)\}/ \sigma v(\eta_{i})]+  \delta(n)p_\alpha \nonumber \\
               & &   + \E\bigg(\sum_{i=1}^n w_{\alpha i} \rho[\{\tilde y_i-h(x_{\alpha i}^T\hat{\beta}_{\alpha}^c)\}/ \sigma v(\eta_{i})] \,\Big|\, y, X \bigg)\bigg\} \label{eqn:criterion}
\end{eqnarray}
is small.  In practice, we often supplement this criterion with graphical diagnostic methods which further explore the quality of the model in ways that are not amenable to simple mathematical description.

As in M\"uller and Welsh (2005) we separate the estimators $\hat{\beta}_{\alpha}^c$ and $\rho$ because in practice we want to compare different estimators indexed by $c \in \mathcal{C}$ and linking $\rho$ to any one of these estimators may excessively favour that estimator.  We adopt the view that we are interested in fitting the core data and predicting core observations rather than those in the tail of the distribution so take $\rho$ to be constant for sufficiently large $|x|$.  The simplest example of such a function (and the one we use in our simulations) is the function which is quadratic near the origin and constant away from the origin as in
\begin{equation}
\rho(z) = \min(z^2, b^2). \label{eqn:rho}
\end{equation}
Following M\"uller and Welsh (2005), we choose $b=2$. Smoother versions of $\rho$ such as are required in our theoretical results are easily defined and we can, when appropriate to the problem, use asymmetric $\rho$ functions.    The weights $w_{\alpha i}$ are Mallows' type weights which may be included for robustness in the $X$ space but can and often will be constant.  The only restrictions on the function $\delta$ are that $\delta(n) \rightarrow \infty$ and $\delta(n)/n \rightarrow 0$ as $n \rightarrow \infty$.  A common choice is $\delta(n) = k \log(n)$ for $k>0$ where we choose $k=2$ (e.g. Schwarz, 1978; M\"uller and Welsh, 2005).  If the criterion were based on the penalized loss function alone then $\delta$ would have to be of order higher than $O(\log\log n)$ as shown in Qian and Field (2002, Theorem 1--3) for logistic regression models.

Let $\hat{\beta}^c_{\alpha}$ be an estimator of type $c$ of the model $\alpha$, and if $\sigma$ has to be estimated, we estimate it from the Pearson residuals $\{y_i-h(x_{\alpha_f i}^T\hat{\beta}^c_{\alpha_f})\}/v(x_{\alpha_f i}^T\hat{\beta}^c_{\alpha_f})$, $i=1,\dots,n$, from a ``full'' model $\alpha_f$.   A ``full''  model is a large model (often assumed to be the model $\{1,\ldots,p\}$) which produces a valid measure of residual spread (but hopefully not so large that we incur a high cost from overfitting).  We omit the subscript $\alpha_f$ and denote the estimator of $\sigma$ by $\hat{\sigma}$ for notational simplicity.   
Then we estimate the penalized in-sample term in the criterion function (\ref{eqn:criterion}) by $ \hat{\sigma}^{c2}\{M_n^{(1)}(\alpha)  +  n^{-1}\delta(n) p_{\alpha}\}$, where
\begin{eqnarray}
M_n^{(1)}(\alpha) &=& n^{-1} \sum_{i=1}^n w_{\alpha i}^c\rho\bigg\{
  \frac{y_i-h(x_{\alpha i}^T\hat{\beta}^c_{\alpha})}{\hat{\sigma}^{c} v(x_{\alpha_f i}^T\hat{\beta}^c_{\alpha_f})}\bigg\}  . \label{eqn: huberized in-sample criterion}
\end{eqnarray}   
Next, we implement a proportionally allocated, stratified  $m$-out-of-$n$ bootstrap of rows of $(y, X)$ in which we (i) compute and order the Pearson residuals, (ii) set the number of strata $K$ at between $3$ and $8$ depending on the sample size $n$, (iii) set stratum boundaries at the $$K^{-1}, 2K^{-1},\ldots, (K-1)K^{-1}$$ quantiles of the Pearson residuals, (iv) allocate observations to the strata in which the Pearson residuals lie, (v) sample $ \mbox{\#(observations in stratum } k)m/n$ (rounded as necessary) rows of $(y, X)$ independently with replacement from stratum $k$ so that the total sample size is $m$, (vi) use these data to construct the estimator $\hat{\beta}_{\alpha,m}^{c*}$, repeat steps (v) and (vi) $B$ independent times and then estimate the conditional expected prediction loss by $\hat{\sigma}^{c2}M_n^{(2)}(\alpha)$, where
\begin{eqnarray}
M_n^{(2)} (\alpha) &=& n^{-1} \E_* \sum_{i=1}^n w_{\alpha i}^c\rho\bigg(
  \frac{y_i-h[x_{\alpha i}^T\{\hat{\beta}^{c*}_{\alpha,m} - \E_*(\hat{\beta}^{c*}_{\alpha,m} - \hat{\beta}_{\alpha}^c)\}]}{\hat{\sigma}^{c} v(x_{\alpha_f i}^T\hat{\beta}^c_{\alpha_f})}\bigg), \label{eqn: huberized 1st est of efficient model}
  \end{eqnarray}  
where $\E_*$ denotes expectation with respect to the bootstrap distribution.  
In practice, it seems useful to take $m$ to be between $25-50\%$ of the sample size $n$ if working with moderate sample sizes, e.g. $50 \leq n \leq 200$. If $n$ is small then $m$ is small and the parameter estimators in the bootstrap do not converge for some bootstrap samples though this typically occurs less often with the stratified bootstrap. If $n$ is large then $m$ can be smaller than $25\%$ of the sample size $n$.
Combining (\ref{eqn: huberized in-sample criterion}) and (\ref{eqn: huberized 1st est of efficient model}), we estimate the criterion function (\ref{eqn:criterion}) by 
\begin{eqnarray}
M_n(\alpha) &=& \hat{\sigma}^{c2}\{M_n^{(1)}(\alpha)  +  n^{-1}\delta(n) p_{\alpha} + M_{n}^{(2)}(\alpha)\}.  \nonumber
\end{eqnarray}

The use of the stratified bootstrap ensures that we obtain bootstrap samples which are similar to the sample data in the sense that observations in the tails of the residual distribution and outliers are represented in each bootstrap sample or, with categorical data,  each category is represented in the bootstrap samples.  In essence, we construct an estimate of the conditional expected prediction loss based on samples which are similar to the sample we have observed.  The estimated variance function is estimated from a ``full'' model so does not change with the model $\alpha$.  This simplifies the procedure and has the advantage of making the procedure more stable.   Finally, we use the bias-adjusted bootstrap estimator $\hat{\beta}^{c*}_{\alpha,m} - \E_*(\hat{\beta}^{c*}_{\alpha,m} - \hat{\beta}_{\alpha}^c)$ rather than the bootstrap estimator $\hat{\beta}^{c*}_{\alpha,m}$ in $M_n^{(2)}(\alpha)$.  As discussed in more detail below, this achieves the same purpose as but avoids the centering technique used in M\"uller and Welsh (2005) and means that we do not have to include an intercept in every model.   It is therefore a useful refinement of the criterion given in M\"uller and Welsh (2005).

% describe computational burden of the selection criterion and possibilities to reduce them
The computational burden of model selection can be reduced by limiting the number of different estimators we consider, reducing their computation by, for example, using good starting values from the initial fit to the data, and by reducing the number of models in ${\cal A}$.    Generally, our approach is to use an eclectic mix of methods including robust versions of deviance-tests, search schemes, diagnostics etc to produce a relatively small set ${\cal A}$ of competing models which we then compare using the methodology presented in this paper. In particular, we present a backward model search algorithm in Section 3.4 that substantially reduces the number of models to be considered while maintaining the consistency of $M_n(\alpha)$.

\section{Theoretical results}

Our procedure is intended to identify useful models whether or not a true model exists and our interest is not restricted to a single best model but to the identification of useful models (which make $M_n(\alpha)$ small).  In this context, if (i) a true model $\alpha_0$ exists and (ii) $\alpha_0 \subseteq \{1,\ldots,p\}$, then consistency in the sense that a procedure identifies $\alpha_0$ with probability tending to one is a desirable property.   Although in practice, we are interested in all the models which make $M_n(\alpha)$ small, we focus in this section on the model which minimises $M_n(\alpha)$ and show that choosing this model is consistent.  Specifically, for $c \in \mathcal{C}$, we define
\begin{equation}
\hat{\alpha}^{c}_{m,n} = \argmin_{\alpha\in\mathcal{A}} M_n(\alpha), \label{eqn: criterion}
\end{equation}
and develop conditions under which for each $c \in \mathcal{C}$,
\begin{equation}
\lim_{n\rightarrow\infty} \P\{ \hat{\alpha}^{c}_{m,n}  = \alpha_0 \}=1. \label{eqn:consistency}
\end{equation}

As in M\"uller and Welsh (2005), we define the subset of correct models $\mathcal{A}_c$ in $\mathcal{A}$ to be the set of models $\alpha \in \mathcal{A}$ such that $\alpha_0 \subseteq \alpha$; all other models are called incorrect models.  For any correct model $\alpha \in \mathcal{A}_c$, the errors $\epsilon_{\alpha i}=y_{i}-h(x_{\alpha i}^{T}\beta_{\alpha})$
%$\epsilon_{\alpha i}=y_{i}-g^{-1}(x_{\alpha i}^{T}\beta_{\alpha})$
 satisfy $\epsilon_{\alpha i} = \epsilon_{\alpha_0 i}$, for $i=1,\cdots,n$, and the components of $\beta_{\alpha}$ corresponding to columns of $X_{\alpha}$ which are not also in $\alpha_0$ equal zero.  
%For notational simplicity, we omit the subscript $\alpha_0$ and write $\epsilon_{\alpha_0 i} = \epsilon_{i}$.  

\subsection{Conditions}

It is convenient to introduce a simplified notation for stating the conditions and simplifying the proof of the  main result.  Write 
$$
h_{\alpha i} = h(x_{\alpha i} ^T \beta_{\alpha}),\,\,\,\,\, h'_{\alpha i} = h'(x_{\alpha i} ^T \beta_{\alpha}),\,\,\,\,\, h^{''}_{\alpha i} = h''(x_{\alpha i} ^T \beta_{\alpha}),
$$
$$
\sigma_i = \sigma v(x_{\alpha_f i} ^T \beta_{\alpha_f}),\,\,\,\,\, \epsilon_{\alpha_0 i} = \epsilon_{i},
$$ 
$$
\psi_i = \psi(\epsilon_i/\sigma_i),\,\,\,\mbox{ and }\,\,\, \psi'_i = \psi'(\epsilon_i/\sigma_i).
$$
Then we require the following conditions.

\begin{enumerate}

\item[(i)]  The $p_{\alpha} \times p_{\alpha}$ matrix 
$$\frac{1}{2n}\sum_{i=1}^n \sigma_{i}^{-2}w_{\alpha i}^c(h^{'2}_{\alpha i}\E\psi'_i- h^{''}_{\alpha i}\E\psi_i) x_{\alpha i}x_{\alpha i}^T \rightarrow \Gamma_{\alpha}^c,$$ 
where $\Gamma_{\alpha}^c$ is of full rank.
%$$n^{-1} \sum_{i=1}^n \sigma_{i}^{-4} w_{\alpha i}^{c2} |x_i|^4 h^{''2}_{\alpha i}\E\psi_i^{2}  < \infty$$
% and 
%$$n^{-1} \sum_{i=1}^n  \sigma_{i}^{-4} w_{\alpha i}^{c2} |x_i|^4 h^{'4}_{i}\E\psi^{'2}_i < \infty.$$

%{\color{blue}  If we assume that the $x_i$ are bounded, the weights are bounded, $h$ and its first two derivatives are continuous, $\sigma$ and $v$ are both positive, then the last two conditions are implied by $\E\psi_i^{2}  < \infty$ and  $\E\psi^{'2}_i < \infty$.
%}
%
%\item[(ib)]  {\color{red}
%$$
% \frac{1}{n}\sum_{i=1}^n \sigma_{i}^{-1} w_{\alpha i}  x_{\alpha i}h'_{\alpha i}\E\psi_i  = O(1)
%$$
%$$
%\frac{1}{n \sigma}\sum_{i=1}^n w_{\alpha i}  x_{\alpha_f i}  v(h_{\alpha_f i})^{-2} v'(h_{\alpha_f i}) h'_{\alpha_f i} \E\epsilon_i\psi_i = O(1)
%$$
%$$
% \frac{1}{n  \sigma}\sum_{i=1}^n w_{\alpha i}v(h_{\alpha_f i})^{-1} \E\epsilon_i \psi_i  = O(1)
%$$
%and
%$$
% \frac{1}{n}\sum_{i=1}^n \sigma_{i}^{-2} w_{\alpha i}^2  |x_{\alpha i}|^2 h^{'2}_{\alpha i}\E\psi_i^2  = O(1)
%$$
%$$
% \frac{1}{n \sigma^2}\sum_{i=1}^n w_{\alpha i}^2 |x_{\alpha_f i}|^2v(h_{\alpha_f i})^{-4 } v'(h_{\alpha_f i})^2 h^{'2}_{\alpha_f i} \E \epsilon_i^2\psi_i^2 = O(1)
%$$
%$$
% \frac{1}{n \sigma^2}\sum_{i=1}^n w_{\alpha i}^2 v(h_{\alpha_f i})^{-2} \E \epsilon_i^2  \psi_i^2  = O(1)
%$$
%}
%{\color{blue}  If we assume that the $x_i$ are bounded, the weights are bounded, $h$ and its first two derivatives are continuous, $\sigma$ and $v$ are both positive,  $v'$ is bounded then the last two conditions are implied by $\E \psi_i^{2}  < \infty$ and  $\E\epsilon^2 \psi^{2}_i < \infty$.
%}

\item[(ii)] For all models $\alpha \in {\cal A}$ (including the full model), the estimators $\hat{\beta}^c_{\alpha} - \beta_{\alpha} = O_p(n^{-1/2})$,  $\hat{\sigma}^c - \sigma = O_p(n^{-1/2})$ with $\sigma > 0$.  For all correct models $\alpha \in {\cal A}_c$,
\[
n\Var(\hat{\beta}_{\alpha}^c) = \Sigma_{\alpha}+o_p(1),
\]
where $\Sigma_{\alpha}$ is of full rank.

\item[(iii)] For all models $\alpha \in {\cal A}$, the bootstrap estimator $\hat{\beta}^{c*}_{\alpha m} \rightarrow \beta_{\alpha}$ in probability. % and for some $B_{\alpha}^c$,
%\[
%m(\E_*\hat{\beta}^{c*}_{\alpha m} - \hat{\beta}^c_{\alpha}) = B_{\alpha}^c + o_p(1).
%\]
For all correct models $\alpha \in {\cal A}_c$, 
\begin{eqnarray*}
m \Var_*(\hat{\beta}^{c*}_{\alpha m})= n\kappa^c \Var(\hat{\beta}^c_{\alpha}) + o_p(1) 
\end{eqnarray*}
and for any two correct models $\alpha_{1},\alpha_{2} \in \mathcal{A}_{c}$ such that $\alpha_{1}\subset\alpha_{2}$
\begin{equation}
\trace(\Sigma_{\alpha_{2}}\Gamma_{\alpha_{2}}) - \trace(\Sigma_{\alpha_{1}}\Gamma_{\alpha_{1}}) >0. \label{eqn: cond (ii)}
\end{equation}

\item[(iv)] The sequence $\delta(n) = o(n/m)$ and $m = o(n)$.

\item[(v)] The derivatives $\psi=\rho'$ and $\psi'$ exist, are uniformly continuous, bounded, $\Var(\epsilon_{i}\psi_i) < \infty$, and $\E\psi'(\epsilon_{i}) > 0$, $i=1,\ldots,n$.

\item[(vi)] The weights are bounded, $h$ and its first two derivatives are continuous, $\sigma$ and $v$ are both positive,  and $v'$ is bounded.

\item[(vii)] The $x_i$ are bounded.

\item[(viii)] For any incorrect model $\alpha$, 
%$\Var\{\psi(\epsilon_{\alpha i})\} < \infty$ and
\[
\liminf_{n\rightarrow\infty} M^{(1)}_{n}(\alpha) > \lim_{n\rightarrow\infty}M^{(1)}_{n}(\alpha_{0}) \quad\text{a.s.}
\]

\end{enumerate}

\smallskip\noindent
Condition (i) is a generalization of a standard condition for fitting regression models which we require for generalized linear models.  
Condition (ii) is satisfied by many estimators;  condition (\ref{eqn: cond (ii)}) restricts the estimators we can consider in $\mathcal{C}$ but allows us to include maximum likelihood and other estimators such as the Cantoni and Ronchetti (2001) estimator.  We refer to (\ref{eqn: cond (ii)}) as the monotonicity condition.   
Condition (iii) specifies the required properties of the bootstrap parameter estimator.   In contrast to M\"uller and Welsh (2005), we have adjusted the bootstrap estimator so we do not have to impose conditions on the asymptotic bias of the bootstrap estimator.  Combining conditions (ii) and (iii), we obtain $\Var_*(\hat{\beta}^{c*}_{\alpha m}) =m^{-1}\kappa^c\Sigma_{\alpha} + o_p(m^{-1})$. Conditions (v)-(vii) enables us to make various two-term Taylor expansions and to control the remainder terms. We require a higher level of smoothness than exhibited by the $\rho$-function (\ref{eqn:rho}) but there are many functions satisfying these properties.    Condition (viii) is a generalisation of Condition (C4) of Shao (1996) to allow a more general choice of $\rho(\cdot)$.

We have specified a simple set of sufficient conditions (particularly in conditions (v)-(vii)) which are appropriate for a robust $\rho$ function and generalized linear models.  However, we note that we can specify alternative conditions and simpler conditions for particular cases.  For example, we obtain alternative conditions if we allow the $x_i$ to be stochastic; see for example Shao (1996, Condition C3. b.).  We can simplify our conditions if we use the nonrobust function $\rho(x)=x^2$; again see Shao (1996, p661).  Even in the robust case, simpler conditions can be given for homoscedastic linear models because $h(x) = x$, $v(x)=1$.  These possibilities are somewhat tangential to our main purpose so we will not pursue them in this paper.

\begin{thm}\label{thm: ms for robust situations}
Under the above conditions, the consistency result (\ref{eqn:consistency}) holds.
\end{thm}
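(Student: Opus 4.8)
The plan is to use the finiteness of $\mathcal{A}$ to reduce (\ref{eqn:consistency}) to the pairwise statements $\P\{M_n(\alpha) > M_n(\alpha_0)\} \to 1$ for each fixed $\alpha \neq \alpha_0$, and to prove these separately for incorrect models and for over-fitted correct models $\alpha \supsetneq \alpha_0$. Because the leading factor $\hat{\sigma}^{c2}$ is common to all models and, by condition (ii), is bounded and bounded away from zero in probability, it suffices to show $\P\{\Delta_n(\alpha) > 0\} \to 1$, where
\[
\Delta_n(\alpha) = \{M_n^{(1)}(\alpha) - M_n^{(1)}(\alpha_0)\} + n^{-1}\delta(n)(p_\alpha - p_{\alpha_0}) + \{M_n^{(2)}(\alpha) - M_n^{(2)}(\alpha_0)\}.
\]

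For an incorrect model $\alpha$, the key observation is that the bias-adjusted bootstrap estimator $\hat{\beta}^{c*}_{\alpha m} - \E_*(\hat{\beta}^{c*}_{\alpha m} - \hat{\beta}^c_\alpha)$ has bootstrap mean $\hat{\beta}^c_\alpha$ and, by condition (iii) and consistency of $\hat{\beta}^c_\alpha$, the same probability limit as $\hat{\beta}^c_\alpha$; together with the boundedness and uniform continuity in conditions (v)--(vii) and a law of large numbers this yields $M_n^{(2)}(\alpha) - M_n^{(1)}(\alpha) = o_p(1)$ for every model $\alpha$. Since $n^{-1}\delta(n)(p_\alpha - p_{\alpha_0}) = o(1)$, we obtain $\Delta_n(\alpha) = 2\{M_n^{(1)}(\alpha) - M_n^{(1)}(\alpha_0)\} + o_p(1)$, and condition (viii) then forces the right-hand side to exceed a positive constant, eventually, on an event of probability tending to one.

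For an over-fitted correct model $\alpha \supsetneq \alpha_0$ the in-sample losses agree to first order, so the comparison must be carried to the next order by a two-term Taylor expansion of the $\rho$-terms --- inside $M_n^{(2)}(\alpha)$ about $\hat{\beta}^c_\alpha$ and inside $M_n^{(1)}(\alpha)$ about $\beta_\alpha$ --- with remainders controlled using conditions (v)--(vii) and the boundedness of the $x_i$. Because the bias-adjusted estimator has bootstrap mean exactly $\hat{\beta}^c_\alpha$, the first-order term in $M_n^{(2)}(\alpha)$ vanishes under $\E_*$, yielding the identity
\[
n M_n^{(2)}(\alpha) = n M_n^{(1)}(\alpha) + \trace\!\big(\hat{\Gamma}^c_{\alpha,n}\,[\,n\,\Var_*(\hat{\beta}^{c*}_{\alpha m})\,]\big) + R_n ,
\]
where $\hat{\Gamma}^c_{\alpha,n}$ is the sample average of condition (i) with $\E\psi'_i,\E\psi_i$ replaced by $\psi'_i,\psi_i$ and $\beta_\alpha$ by $\hat{\beta}^c_\alpha$, so that consistency of $\hat{\beta}^c_\alpha$, a law of large numbers and condition (i) give $\hat{\Gamma}^c_{\alpha,n} = \Gamma^c_\alpha + o_p(1)$, and $R_n$ collects the higher-order terms. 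Conditions (ii) and (iii) give $n\,\Var_*(\hat{\beta}^{c*}_{\alpha m}) = (n/m)\kappa^c\Sigma_\alpha + o_p(n/m)$ and $R_n = o_p(n/m)$, so $n\{M_n^{(2)}(\alpha) - M_n^{(1)}(\alpha)\} = (n/m)\kappa^c\trace(\Gamma^c_\alpha\Sigma_\alpha) + o_p(n/m)$. Since $\hat{\beta}^c_\alpha - \beta_\alpha = O_p(n^{-1/2})$ makes $n\{M_n^{(1)}(\alpha) - M_n^{(1)}(\alpha_0)\} = O_p(1)$ and $\delta(n)(p_\alpha - p_{\alpha_0}) = o(n/m)$ by condition (iv), multiplying $\Delta_n(\alpha)$ by $m$ and letting $n \to \infty$ gives $m\Delta_n(\alpha) = \kappa^c\{\trace(\Gamma^c_\alpha\Sigma_\alpha) - \trace(\Gamma^c_{\alpha_0}\Sigma_{\alpha_0})\} + o_p(1)$ (with $\kappa^c > 0$), which is strictly positive by the monotonicity condition (\ref{eqn: cond (ii)}) applied with $\alpha_1 = \alpha_0 \subset \alpha_2 = \alpha$. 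Hence $\P\{\Delta_n(\alpha) > 0\} \to 1$, and collecting the two cases over the finitely many $\alpha \neq \alpha_0$ proves (\ref{eqn:consistency}).

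The hard part is the over-fitted correct case. The crux is the identity above: showing that $n M_n^{(2)}(\alpha)$ reduces to $n M_n^{(1)}(\alpha)$ plus the single bootstrap-variance trace term is exactly where the bias adjustment $\hat{\beta}^{c*}_{\alpha m} - \E_*(\hat{\beta}^{c*}_{\alpha m} - \hat{\beta}^c_\alpha)$ earns its keep, replacing the explanatory-variable centering of M\"uller and Welsh (2005). More delicate still is controlling the Taylor remainders uniformly in $i$ and in the bootstrap randomness so that $R_n$ is genuinely $o_p(n/m)$ and not merely $O_p(n/m)$, and showing that the sample matrix $\hat{\Gamma}^c_{\alpha,n}$ concentrates on $\Gamma^c_\alpha$; this is what the smoothness and boundedness in conditions (v)--(vii) are for. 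The genuinely substantive input is the monotonicity condition (\ref{eqn: cond (ii)}), which is assumed here; its verification for particular estimators (via the generalized-inverse decomposition of Section 3.3) is the most demanding step of the surrounding theory.
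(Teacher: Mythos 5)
Your overall route is the same as the paper's: Taylor-expand $M_n^{(2)}(\alpha)$ about $\hat{\beta}^c_{\alpha}$, observe that the bias-adjusted bootstrap estimator has bootstrap mean $\hat{\beta}^c_{\alpha}$ so the linear term vanishes under $\E_*$, identify the quadratic term with $\trace(\Sigma_{\alpha}\Gamma_{\alpha})$ via conditions (i)--(iii), control the remainders uniformly using (v)--(vii), dispose of incorrect models with condition (viii), and decide among correct models with the monotonicity condition and condition (iv). This is exactly the paper's decomposition $M_n^{(2)}(\alpha)=T_1+T_2$ with $T_1\approx M_n^{(1)}(\alpha)$ and $T_2\approx (\kappa^c/m)\,\mbox{const}\cdot\trace(\Sigma_{\alpha}\Gamma_{\alpha})$ (your constant differs from the paper's only in how the factor $\tfrac12$ of condition (i) is bookkept, which is immaterial).

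There is, however, one step you assert that does not follow from what you have: in the over-fitted correct case you claim that $\hat{\beta}^c_{\alpha}-\beta_{\alpha}=O_p(n^{-1/2})$ gives $n\{M_n^{(1)}(\alpha)-M_n^{(1)}(\alpha_0)\}=O_p(1)$. Expanding $M_n^{(1)}(\alpha)$ about $\beta_{\alpha}$ (the paper's $T_1$ analysis) produces a linear term of the form $(\hat{\beta}^c_{\alpha}-\beta_{\alpha})^T n^{-1}\sum_i w_{\alpha i}\sigma_i^{-1}x_{\alpha i}h'_{\alpha i}\psi_i$, and for a robust $\psi$ with asymmetric errors $\E\psi_i\neq 0$ in general, so this term is genuinely $O_p(n^{-1/2})$ and does not cancel between two nested correct models fitted by a generic estimator; root-$n$ consistency alone therefore yields only $n\{M_n^{(1)}(\alpha)-M_n^{(1)}(\alpha_0)\}=O_p(n^{1/2})$. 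Consequently $m\{M_n^{(1)}(\alpha)-M_n^{(1)}(\alpha_0)\}=O_p(m n^{-1/2})$, which is not $o_p(1)$ under condition (iv) ($m=o(n)$, $\delta(n)=o(n/m)$) unless one either imposes $m=o(n^{1/2})$, or argues an extra cancellation (e.g.\ $\E\psi_i=0$, or $\hat{\beta}^c_{\alpha}$ solving the estimating equations associated with the same weighted $\rho$ so the linear term is killed by the first-order conditions, or the centering/intercept device of M\"uller and Welsh, 2005). Note this is precisely the phenomenon Section 3.2 of the paper emphasizes for the bootstrap term --- the linear term is not negligible for robust $\psi$, which is why the bias adjustment is introduced --- so it cannot be waved away for the in-sample term either. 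The paper itself is terse here (it closes with ``the proof is completed as in M\"uller and Welsh (2005)''), but your write-up states the needed bound as if it were a consequence of condition (ii), and as written that step fails; it needs either an additional argument or an additional assumption.
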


\begin{proof}[Proof of Theorem \ref{thm: ms for robust situations}]\hfill

The proof of this result is similar to that given in M\"uller and Welsh (2005).  The main term we need to deal with is the bootstrap term 
\begin{eqnarray*}
M_{n}^{(2)}(\alpha) &=& \frac{1}{n} \E_* \sum_{i=1}^n w_{\alpha i} \rho\bigg\{
 (y_i-h[x_{\alpha i}^T\{\hat{\beta}_{\alpha,m}^* - \E_*(\hat{\beta}_{\alpha,m}^* - \hat{\beta}_{\alpha})\}])/\hat{\sigma}_i\bigg\},
\end{eqnarray*} 
where $\hat{\beta}_{\alpha}$ and $\hat{\sigma}_i = \hat{\sigma} v(x_{\alpha_f i}^T\hat{\beta}_{\alpha_f})$ are constant with respect to the bootstrap.  We make a Taylor expansion of $\rho$ as a function of $\hat{\beta}_{\alpha,m}^* - \E_*(\hat{\beta}_{\alpha,m}^* - \hat{\beta}_{\alpha})$ about $\hat{\beta}_{\alpha}$, to obtain
\begin{eqnarray*}
M_{n}^{(2)}(\alpha) &=& \frac{1}{n}\sum_{i=1}^n w_{\alpha i}\rho[\{y_i-h(x_{\alpha i}^T\hat{\beta}_{\alpha})\}/\hat{\sigma}_{i}] \\
&& +  \E_* \frac{1}{2n}\sum_{i=1}^n \hat{\sigma}^{-2}_{i}w_{\alpha i}x_{\alpha i}^T (\hat{\beta}^*_{\alpha, m}  - \E_*\hat{\beta}_{\alpha,m}^* ) (\hat{\beta}^*_{\alpha, m} - \E_*\hat{\beta}_{\alpha,m}^* )^T x_{\alpha i}\\
&& \times \left( h'(x_{\alpha i}^T\bar{\beta}_{\alpha})^2   \psi'[\{ y_i - h(x_{\alpha i}^T\bar{\beta}_{\alpha})\}/\hat{\sigma}_{i}] - h''(x_{\alpha i}^T\bar{\beta}_{\alpha}) \psi[\{y_i - h(x_{\alpha i}^T\bar{\beta}_{\alpha})\}/\hat{\sigma}_{i}] \right)\\
&=& T_{1} + T_{2},  
\end{eqnarray*} 
where $|\bar{\beta}_{\alpha} - \hat{\beta}_{\alpha}| \leq |\hat{\beta}^{*}_{\alpha,m} -\hat{\beta}_{\alpha}|$.   This equation is analogous to (9) in M\"uller and Welsh (2005) except that we have eliminated the linear term by using the bias-adjusted bootstrap estimator.  We consider $T_1$ and $T_2$ in turn.

\noindent{\bf Order of $T_{2}$:}
Let
 \begin{eqnarray*}
\bar H_{\alpha i} &=& h'(x_{\alpha i}^T\bar{\beta}_{\alpha})^2   \psi'[\{y_i - h(x_{\alpha i}^T\bar{\beta}_{\alpha})\}/\hat{\sigma}_{i}] - h''(x_{\alpha i}^T\bar{\beta}_{\alpha}) \psi[\{y_i - h(x_{\alpha i}^T\bar{\beta}_{\alpha})\}/\hat{\sigma}_{i}] \\
&=& h'(x_{\alpha i}^T\bar{\beta}_{\alpha})^2   \psi'[\{\epsilon_i + h(x_{\alpha i}^T\beta_{\alpha}) - h(x_{\alpha i}^T\bar{\beta}_{\alpha})\}/\hat{\sigma}_{i}] - h''(x_{\alpha i}^T\bar{\beta}_{\alpha}) \psi[\{\epsilon_i + h(x_{\alpha i}^T\beta_{\alpha}) - h(x_{\alpha i}^T\bar{\beta}_{\alpha})\}/\hat{\sigma}_{i}]
\end{eqnarray*}
and write  
 \begin{eqnarray*}
T_2 &=&  \E_* \frac{1}{2n}\sum_{i=1}^n \hat{\sigma}^{-2}_{i}w_{\alpha i}x_{\alpha i}^T (\hat{\beta}^*_{\alpha, m} - \E_*\hat{\beta}^*_{\alpha, m}) (\hat{\beta}^*_{\alpha, m} - \E_*\hat{\beta}^*_{\alpha, m})^T x_{\alpha i}\bar H_{\alpha i}\\
&=& \frac{1}{2n}\sum_{i=1}^n \sigma^{-2}_{i}w_{\alpha i}x_{\alpha i}^T \Var_*(\hat{\beta}^*_{\alpha, m})x_{\alpha i} (h^{'2}_{\alpha i}  \E \psi'_i - h''_{\alpha i} \E\psi_i)\\
&& +\frac{1}{2n}\sum_{i=1}^n \sigma^{-2}_{i}w_{\alpha i}x_{\alpha i}^T \Var_*(\hat{\beta}^*_{\alpha, m}) x_{\alpha i} (h^{'2}_{\alpha i}   \psi'_i  - h''_{\alpha i} \psi_i  - h^{'2}_{\alpha i}  \E \psi'_i + h''_{\alpha i} \E\psi_i)\\
&& + \E_* \frac{1}{2n}\sum_{i=1}^n w_{\alpha i}x_{\alpha i}^T (\hat{\beta}^*_{\alpha, m} - \E_*\hat{\beta}^*_{\alpha, m}) (\hat{\beta}^*_{\alpha, m} - \E_*\hat{\beta}^*_{\alpha, m})^T x_{\alpha i} (\hat{\sigma}^{-2}_{i}\bar H_{\alpha i} - \sigma^{-2}_{i}h^{'2}_{\alpha i}   \psi'_i +  \sigma^{-2}_{i}h''_{\alpha i} \psi_i).
\end{eqnarray*}
Then 
\begin{eqnarray*}
\lefteqn{\frac{1}{2n}\sum_{i=1}^n \sigma^{-2}_{i}w_{\alpha i}x_{\alpha i}^T \Var_*(\hat{\beta}^*_{\alpha, m})x_{\alpha i} (h^{'2}_{\alpha i}  \E \psi'_i - h''_{\alpha i} \E\psi_i)}\\
&=& \frac{1}{2n} \trace\left\{\Var_* (\hat{\beta}^*_{\alpha, m})
    \sum_{i=1}^n \sigma^{-2}_{i} w_{\alpha i}x_{\alpha i}x_{\alpha i}^T (h^{'2}_{\alpha i}  \E \psi'_i- h''_{\alpha i} \E\psi_i)\right\} \\
&=& \frac{\kappa^c}{2m} \trace( \Sigma_{\alpha} \Gamma_{\alpha})+ o_{p}(m^{-1})
\end{eqnarray*}
by condition (iii) and the first part of condition (i).
Similarly
$$
\frac{1}{2n}\sum_{i=1}^n \sigma^{-2}_{i}w_{\alpha i}x_{\alpha i}^T \Var_*(\hat{\beta}^*_{\alpha, m}) x_{\alpha i} (h^{'2}_{\alpha i}   \psi'_i - h''_{\alpha i} \psi_i - h^{'2}_{\alpha i}  \E \psi'_i +  h''_{\alpha i} \E\psi_i) = o_p(m^{-1})
$$
by condition (iii) and the second part of condition (i).
Finally,
$$
| \E_* \frac{1}{2n}\sum_{i=1}^nw_{\alpha i}x_{\alpha i}^T (\hat{\beta}^*_{\alpha, m} - \E_*\hat{\beta}^*_{\alpha, m}) (\hat{\beta}^*_{\alpha, m} - \E_*\hat{\beta}^*_{\alpha, m})^T x_{\alpha i} ( \hat{\sigma}^{-2}_{i}\bar H_{\alpha i} -  \sigma^{-2}_{i}h^{'2}_{\alpha i}\psi'_i +   \sigma^{-2}_{i}h''_{\alpha i} \psi_i )| =  o_p(m^{-1})
$$
%\begin{eqnarray*}
%\lefteqn{| \E_* \frac{1}{2n}\sum_{i=1}^nw_{\alpha i}x_{\alpha i}^T (\hat{\beta}^*_{\alpha, m} - \E_*\hat{\beta}^*_{\alpha, m}) (\hat{\beta}^*_{\alpha, m} - \E_*\hat{\beta}^*_{\alpha, m})^T x_{\alpha i} ( \hat{\sigma}^{-2}_{i}\tilde H_{\alpha i} -  \sigma^{-2}_{i}h''_{\alpha i} \psi_i -  \sigma^{-2}_{i}h^{'2}_{\alpha i}\psi'_i)|}\\
%& \le & O_p(m^{-1}) \max_{1 \le i \le n} \sup_{\epsilon} \sup_{\beta} |\tilde H_{\alpha i}  -  \sigma^{-2}_{i}h''_{\alpha i} \psi_i -  \sigma^{-2}_{i}h^{'2}_{\alpha i}   \psi'_i|   \frac{1}{2n}\sum_{i=1}^nw_{\alpha i}|x_{\alpha i}|^2\\
%&=& o_p(m^{-1})
%\end{eqnarray*}
provided 
\begin{eqnarray*}
 \max_{1 \le i \le n}\sup_{\epsilon} \sup_{|t -\beta_{\alpha}| \le n^{-1/2}C}|\hat{\sigma}^{-2}_{i} h'(x_{\alpha i}^Tt)^2   \psi'[\{\epsilon + h(x_{\alpha i}^T\beta_{\alpha}) - h(x_{\alpha i}^T t)\}/\hat{\sigma}_{i}]-  \sigma^{-2}_{i} h^{'2}(x_{\alpha i}^T\beta_{\alpha})   \psi'(\epsilon /\sigma_{i})| = o_p(1)
\end{eqnarray*}
and
\begin{eqnarray*}
 \max_{1 \le i \le n}\sup_{\epsilon} \sup_{|t -\beta_{\alpha}| \le n^{-1/2}C}  |\hat{\sigma}^{-2}_{i} h''(x_{\alpha i}^Tt)   \psi[\{\epsilon + h(x_{\alpha i}^T\beta_{\alpha}) - h(x_{\alpha i}^T t)\}/\hat{\sigma}_{i}] -  \sigma^{-2}_{i} h''(x_{\alpha i}^T\beta_{\alpha})    \psi(\epsilon /\sigma_{i})| = o_p(1).
\end{eqnarray*}
Conditions (v)-(viii) ensure that these requirements hold.

\noindent{ \bf Order of $T_{1}$:}
Let $|\tilde{\beta}_{\alpha} - {\beta}_{\alpha}| \leq |\hat{\beta}_{\alpha} -{\beta}_{\alpha}|$, $|\tilde{\beta}_{\alpha_f} - {\beta}_{\alpha_f}| \leq |\hat{\beta}_{\alpha_f} -{\beta}_{\alpha_f}|$ and $|\tilde{\sigma} - \sigma| \le | \hat{\sigma} - \sigma|$.
Recall that $\sigma_i = \sigma v(h_{\alpha_f i})$ and write
$$
D(y_i, h_{\alpha i}, \sigma_{i})= \left(\begin{array}{c} - x_{\alpha i}h'_{\alpha i}\sigma_{i}^{-1}\psi\{(y_i-h_{\alpha i})/\sigma_{i}\} \\ -\sigma_{i}^{-2}v(h^{-1}(h_{\alpha_f i})) (y_i-h_{\alpha i})\psi\{(y_i-h_{\alpha i})/\sigma_{i}\} 
\\  -x_{\alpha_f i} \sigma_{i}^{-2}\sigma v'(h^{-1}(h_{\alpha_f i})) (y_i-h_{\alpha i})\psi\{(y_i-h_{\alpha i})/\sigma_{i}\} \end{array}\right).
$$
Then
\begin{eqnarray*}
\frac{1}{n}\sum_{i=1}^n w_{\alpha i}\rho\{(y_i-\hat h_{\alpha i})/\hat{\sigma}_{i}\} 
&=& \frac{1}{n}\sum_{i=1}^n w_{\alpha i}\rho(\epsilon_i/\sigma_i)+ \frac{1}{n}\sum_{i=1}^n w_{\alpha i}  (\hat{\beta}_{\alpha} - {\beta}_{\alpha}, \hat{\sigma} - \sigma, \hat{\beta}_{\alpha_f } - \beta_{\alpha_f})^T D(y_i, h_{\alpha i}, \sigma_{i})\\
& & + \frac{1}{n}\sum_{i=1}^n  w_{\alpha i}  (\hat{\beta}_{\alpha} - {\beta}_{\alpha}, \hat{\sigma} - \sigma, \hat{\beta}_{\alpha_f } - \beta_{\alpha_f})^T\{D(y_i, \tilde h_{\alpha i}, \tilde{\sigma}_{i}) - D(y_i, h_{\alpha i}, \sigma_{i})\}\\  \nonumber \\
&=& \frac{1}{n}\sum_{i=1}^n w_{\alpha i}\rho(\epsilon_i/\sigma_i) +O_{p}(n^{-1/2})
\end{eqnarray*}
provided
$$
 \max_{1 \le i \le n}\sup_{\epsilon}\sup_{|t -\beta_{\alpha}| \le n^{-1/2}C}  |\hat{\sigma}^{-1}_{i} h'(x_{\alpha i}^T t)   \psi[\{\epsilon + h(x_{\alpha i}^T\beta_{\alpha}) - h(x_{\alpha i}^T t)\}/\hat{\sigma}_{i}]  - \sigma_{i}^{-1}h'_{\alpha i}\psi(\epsilon/\sigma_{i}) | = o_p(1)
$$
\begin{eqnarray*}
\lefteqn{ \max_{1 \le i \le n}\sup_{\epsilon} \sup_{|t -\beta_{\alpha}| \le n^{-1/2}C} |\hat{\sigma}^{-1}v(x_{\alpha_f i}^T\hat{\beta}_{\alpha_f})^{-2}v'(x_{\alpha_f i}^T\hat{\beta}_{\alpha_f})   \{\epsilon + h(x_{\alpha i}^T\beta_{\alpha}) - h(x_{\alpha i}^T t)\}}\\
&& \times \psi[\{\epsilon + h(x_{\alpha i}^T\beta_{\alpha}) - h(x_{\alpha i}^T t)\}/\hat{\sigma}_{i}] - 
\sigma^{-1}v(x_{\alpha_f i}^T\beta_{\alpha_f i})^{-2} v'(x_{\alpha_f i}^T\beta_{\alpha_f})  \epsilon \psi(\epsilon/\sigma_{i})| = o_p(1)
\end{eqnarray*}
\begin{eqnarray*}
\lefteqn{\max_{1 \le i \le n}\sup_{\epsilon}\sup_{|t -\beta_{\alpha}| \le n^{-1/2}C}  |\hat{\sigma}^{-2} v(x_{\alpha_f i}^T\hat{\beta}_{\alpha_f})^{-1}  \{\epsilon + h(x_{\alpha i}^T\beta_{\alpha}) - h(x_{\alpha i}^T t\} \psi[\{\epsilon + h(x_{\alpha i}^T\beta_{\alpha})}\\
&&  - h(x_{\alpha i}^T t)\}/\hat{\sigma}_{i}] -  \sigma^{-2}v(x_{\alpha_f i}^T\beta_{\alpha_f})^{-1} \epsilon \psi(\epsilon/\sigma_{i})| = o_p(1).
\end{eqnarray*}
As for $T_2$, these results follow from conditions (v)-(vii).

Putting both terms together, it follows that
\begin{eqnarray}
M_{n}^{(2)}(\alpha)
&=&  \frac{1}{n}\sum_{i=1}^n w_{\alpha i}\rho\{(y_i-h(x_{\alpha i}^T\hat{\beta}_{\alpha}))/\hat{\sigma}_{i}\}  + \frac{\kappa^c}{2m} \trace( \Sigma_{\alpha} \Gamma_{\alpha}) + o_{p}(m^{-1}) \label{eqn asymptotics in M2}
\end{eqnarray}
and the proof is completed as in M\"uller and Welsh (2005).
\end{proof}

\subsection{The elimination of bias}\label{sec: bias}

One of the main difficulties in constructing model selection criteria like $M_n(\alpha)$ is removing the bias (equivalently the linear term) in the expansion of  $M_n^{(2)}(\alpha)$.  
Suppose that instead of the bias-adjusted bootstrap estimator $\hat{\beta}^{c*}_{\alpha,m} - \E_*(\hat{\beta}^{c*}_{\alpha,m} - \hat{\beta}_{\alpha}^c)$, we use the bootstrap estimator $\hat{\beta}^{c*}_{\alpha,m}$ in $M_n^{(2)}(\alpha)$.  Then when we expand $M_n^{(2)}(\alpha)$ as in Shao (1996), M\"uller and Welsh (2005) or the proof of Theorem 3.1, we obtain the linear term
\begin{equation}
\E_*(\hat{\beta}^*_{\alpha, m} - \hat{\beta}_{\alpha})^T  \frac{1}{n}\sum_{i=1}^n \hat{\sigma}_{i}^{-1}w_{\alpha i}x_{\alpha i}  h'(x_{\alpha i}^T\hat{\beta}_{\alpha}) \psi\{(y_i - h(x_{\alpha i}^T\hat{\beta}_{\alpha}))/\hat{\sigma}_{i}\}.  \label{eqn:linear term}
\end{equation}
As shown in M\"uller and Welsh (2005), the bias term $\E_*(\hat{\beta}^*_{\alpha, m} - \hat{\beta}_{\alpha})$ is typically a function of $\alpha$ with leading term $O_p(m^{-1})$, the same as the quadratic term in the expansion.  Since the quadratic term governs the selection of correct models, it is crucial that the linear term be at least of smaller order.

There are various ways to make (\ref{eqn:linear term}) of order $o_p(m^{-1})$.  Notice that ordinarily the mean in (\ref{eqn:linear term}) is asymptotic to 
$$
 \frac{1}{n}\sum_{i=1}^n \sigma_{i}^{-1} w_{\alpha i}x_{\alpha i}  h'_{\alpha i} \E \psi_{i}
$$ 
which is $O(1)$.  However, if $ \E \psi_{i} = 0$, then it can be $O_p(n^{-1/2})$ which can be made $o_p(m^{-1})$.  This is the approach used in Shao (1996).  It holds when $\psi(x)=x$ but this is a nonrobust choice and hence unappealing in general.  M\"uller and Welsh (2005) took a different approach in which they insisted that each model contain an intercept and then centered the explanatory variables so that they have mean zero and the bias is forced into the intercept.  In fact, the intercept can be eliminated by replacing the intercept of the bootstrap estimator by that of the estimator $\hat{\beta}_{\alpha}$ or by fixing the intercepts at the value of the intercept estimated under a ``full'' model.  This approach is much less attractive in the present more general context because the centering vector has to include estimates of $\sigma_{i}$ and $ \E \psi_{i}$  (which previously did not depend on $i$) and $h'_{\alpha i}$ (which was previously not present).  This means that the centering vector is stochastic and the centered explanatory variables cannot simply be conditioned on.  Even if we overcome these difficulties, we have to ensure that the criterion is consistent and the arguments given in the next subsection do not apply unless the model is fitted with the same covariates as the model selection criterion uses.  This approach is not therefore very attractive.  

A different approach would be to require as in M\"uller and Welsh (2005) that $\E_*(\hat{\beta}^*_{\alpha, m} - \hat{\beta}_{\alpha}) = m^{-1}B_{\alpha} +o_p(m^{-1})$, estimate $B_{\alpha}$ and then adjust the criterion by subtracting off an estimate of  (\ref{eqn:linear term}).  Although this will remove the bias, it will add a contribution to the quadratic term which will affect the arguments in the next subsection.  Also, it changes the criterion which then loses its natural interpretability.  It is far better to think in terms of adjusting the bootstrap estimator $\hat{\beta}^*_{\alpha, m}$ for bias.  We could do this by focussing on $B_{\alpha}$ (as we only need the leading term) but then we would need to derive and estimate $B_{\alpha}$ for each estimator we consider.  Fortunately, we have available the bias itself in the very natural form $\E_*(\hat{\beta}^*_{\alpha, m} - \hat{\beta}_{\alpha})$ and so we can remove the bias entirely without having to assume any particular form.  This is the solution that we have adopted in using the bias adjusted bootstrap estimator in $M_n^{(2)}(\alpha)$.

\subsection{The monotonicity of $ \trace( \Sigma_{\alpha} \Gamma_{\alpha})$}\label{sec: examples and counterexamples}

The assumption (iii) that $ \trace( \Sigma_{\alpha} \Gamma_{\alpha})$ is monotone in $p_{\alpha}$ does not hold in general for arbitrary positive semi--definit matrices $\Sigma_{\alpha}$ and $\Gamma_{\alpha}$.   For example, with
\begin{equation*}
\Sigma_{\alpha_{2}} = \begin{pmatrix} 1.0 & -0.5 \\ -0.5 & 1.0 \end{pmatrix}, \quad
\Gamma_{\alpha_{2}} = \begin{pmatrix} 1.0 & 0.2 \\ 0.2 & 0.1 \end{pmatrix}, \quad
\Sigma_{\alpha_{1}} = 1.0, \quad
\Gamma_{\alpha_{1}} = 1.0,
\end{equation*}
and we find that
\begin{equation*}
\trace(\Sigma_{\alpha_{2}}\Gamma_{\alpha_{2}})
- \trace(\Sigma_{\alpha_{1}}\Gamma_{\alpha_{1}}) = 0.9 - 1.0 = -0.1.
\end{equation*}
However, M\"uller and Welsh (2005) prove that for linear regression models, the condition holds for the class of Mallows type M--estimators or one--step Mallows type M--estimators etc., because of the relationship between $\Var(\hat{\beta}_{\alpha})$ and $\Gamma_{\alpha}$.

\medskip
Consider the maximum likelihood estimator for generalized linear models.  We can write condition (i) as
$$
n^{-1} X_{\alpha}^{T}W_{\Gamma_{\alpha}}X_{\alpha}\rightarrow \Gamma_{\alpha},
$$
where $W_{\Gamma_{\alpha}} = \frac{1}{2}\operatorname{diag}(\sigma_1^{-2}w_{\alpha 1}(h^{'2}_{\alpha 1}\E\psi'_{1}- h^{''}_{\alpha 1}\E\psi_{1}),\ldots,\sigma_n^{-2}w_{\alpha n}(h^{'2}_{\alpha n}\E\psi'_n - h^{''}_{\alpha n}\E\psi_n))$.  From McCullagh and Nelder (1989, p43), the maximum likelihood estimator $\hat{\beta}_{\alpha}$ satisfies
\begin{equation*}
n\Var(\hat{\beta}_{\alpha}) = (X_{\alpha}^{T}W_{\Sigma_{\alpha}}X_{\alpha})^{-1} + o_{p}(1),
\end{equation*}
where $W_{\Sigma_{\alpha}} = \operatorname{diag}(h^{'2}_{\alpha 1}/\sigma_1^2 ,\ldots,h^{'2}_{\alpha n}/\sigma_n^2)$. We have to show that 
$$
\trace\Big\{(X_{\alpha}^{T}W_{\Sigma_{\alpha}} X_{\alpha})^{-1} X_{\alpha}^{T}W_{\Gamma_{\alpha}}X_{\alpha}\Big\}
$$ 
is strictly monotone increasing in $p_{\alpha}$. 

Reorder the rows of $X_{\alpha}$ if necessary so that the top $p_{\alpha} \times p_{\alpha}$ submatrix $C_{\alpha}$ is nonsingular.  Then the $p_{\alpha} \times n$ matrix $X_{\alpha}^{-} = (C_{\alpha}^{-1}, 0)$ is a generalized  inverse of $X_{\alpha}$. Then we have that
\begin{eqnarray*}
\trace\Big( X_{\alpha}^{-} W_{\Sigma_{\alpha}}^{-1} X_{\alpha}^{-T}  X_{\alpha}^{T}W_{\Gamma_{\alpha}}X_{\alpha}\Big) 
&=& \trace\Big(X_{\alpha} X_{\alpha}^{-} W_{\Sigma_{\alpha}}^{-1} X_{\alpha}^{-T}  X_{\alpha}^{T}W_{\Gamma_{\alpha}}\Big) \\
 &=& \trace\Big(X_{\alpha} X_{\alpha}^{-} W_{\Sigma_{\alpha}}^{-1}  X_{\alpha} X_{\alpha}^{-} W_{\Gamma_{\alpha}}\Big).
\end{eqnarray*}
By definition of the generalized  inverse, $X_{\alpha} X_{\alpha}^{-}$ is a symmetric $n\times n$ matrix with first $p_{\alpha}$ diagonal elements equal to $+1$ and the remaining elements zero so that $X_{\alpha} X_{\alpha}^{-} = \operatorname{diag}(1,\ldots,1,0,\ldots,0)$. Therefore,
\begin{eqnarray*}
\trace\Big\{(X_{\alpha}^{T}W_{\Sigma_{\alpha}} X_{\alpha})^{-1} X_{\alpha}^{T}W_{\Gamma_{\alpha}}X_{\alpha}\Big\} &=& \frac{1}{2} \sum_{i=1}^{p_{\alpha}} w_{\alpha i} \frac{h^{'2}_{\alpha i}\E\psi'_i - h^{''}_{\alpha i}\E\psi_i}{h^{'2}_{i}} .
\end{eqnarray*}
The simplest sufficient condition for monotonicity is 
\begin{eqnarray}
h^{'2}_{\alpha i}\E\psi'_i > h^{''}_{\alpha i}\E\psi_i, \quad  i= 1,\dots, n.  \label{eqn:monotonicity}
\end{eqnarray}
Since the left hand side is positive, it suffices to show that $h^{''}_{\alpha i}\E\psi_i \le 0$ for $i= 1,\dots, n$.

The monotonicity condition (\ref{eqn:monotonicity}) holds if $\E\psi_i = 0$ or $h^{''}_{\alpha i} = 0$.  The first case occurs when (i) $\rho(x) = x^2$  or (ii) the $\epsilon_i = y_i - h_{\alpha i}$ has a distribution which is symmetric about zero and $\psi$ is antisymmetric and the second when we use the identity link so $h(x) =x$.   Shao (1996) exploited (i) but this choice favours least squares estimation and is non-robust so we prefer not to use it; (ii) applies to Gaussian models but not to models with asymmetric distributions.  Similarly, the identity link is widely used in Gaussian models and may be used in gamma models but is not useful in binomial and Poisson models.  In these cases, we need to examine (\ref{eqn:monotonicity}) more carefully.  For the log link which is often used in Poisson and gamma models
$$
h(\eta_{\alpha i}) =  h'(\eta_{\alpha i}) = h''(\eta_{\alpha i}) =  \exp(\eta_{\alpha i}) > 0
$$
and for the reciprocal link which is often used in gamma models
$$
h(\eta_{\alpha i}) =  \frac{1}{ \eta_{\alpha i}},\quad h'(\eta_{\alpha i}) =  - \frac{1}{\eta_{\alpha i}^{2}},\quad h''(\eta_{\alpha i}) = \frac{2}{\eta_{\alpha i}^{3}} > 0.
$$
However, for many right skewed distributions like the Poisson and gamma, anti-symmetric $\psi$ functions with sufficiently large $b$ truncate more of the upper tail than the lower tail so  $\E\psi_i \le 0$. To see this, note that for the $\psi$ function (\ref{eqn:rho}), we can write
\begin{eqnarray*}
\E\psi_i &=&\int_{-\mu/\sigma}^{\infty} \psi(z) dF(\sigma z + \mu)\\
&=&  \int_{-\min(b,\mu/\sigma)}^b 2z  dF(\sigma z + \mu) \\
&=&  -\int_{-\mu/\sigma}^{-\min(b,\mu/\sigma)} 2z dF(\sigma z + \mu) - \int_{b}^{\infty} 2z dF(\sigma z + \mu)\\
&\le & 0,
\end{eqnarray*}
provided $b$ is large enough to ensure that $\int_{-\mu/\sigma}^{-\min(b,\mu/\sigma)} z dF(\sigma z + \mu) + \int_{b}^{\infty} z dF(\sigma z + \mu) \ge 0$.
It follows that $h^{''}_{\alpha i}\E\psi_i \le 0$  and (\ref{eqn:monotonicity}) holds in these cases. 
For the logistic link
$$
h(\eta_{\alpha i}) = \frac{\exp(\eta_{\alpha i})}{ 1 + \exp(\eta_{\alpha i})},\quad h'(\eta_{\alpha i}) = \frac{\exp(\eta_{\alpha i})}{(1+\exp(\eta_{\alpha i}))^{2}},\quad h''(\eta_{\alpha i}) = \frac{\exp(\eta_{\alpha i})-\exp(2\eta_{\alpha i})}{(1+\exp(\eta_{\alpha i}))^{3}}
$$
so that $h_{\alpha i} <1/2$, $h^{''}_{\alpha i} > 0$ if $\eta_{\alpha_i} < 0$ and $h_{\alpha i} > 1/2$, $h^{''}_{\alpha i} < 0$ if $\eta_{\alpha_i} > 0$ and we need a more careful analysis.  The Bernoulli model can be left or right skewed depending on the value of $h_{\alpha i}$ so $\E\psi_i $ can be positive or negative.  Fortunately, for anti-symmetric $\psi$,
\begin{eqnarray*}
\E\psi_i &=& \E \psi\{(y_i - h_{\alpha_i})/h'_{\alpha_i}\} \\
&=& \psi( - h_{\alpha_i}/h'_{\alpha_i})(1- h_{\alpha_i}) + \psi\{(1- h_{\alpha_i})/h'_{\alpha_i}\}  h_{\alpha_i}\\
&=& -  \psi(h_{\alpha_i}/h'_{\alpha_i})(1- h_{\alpha_i}) + \psi\{(1- h_{\alpha_i})/h'_{\alpha_i}\}  h_{\alpha_i}
\end{eqnarray*}
from which
$\E\psi_i  \le 0$ if $\eta_{\alpha_i} < 0$ and  $\E\psi_i  \ge  0$ if $\eta_{\alpha_i} > 0$ so that $h^{''}_{\alpha i}\E\psi_i \le 0$  and (\ref{eqn:monotonicity}) holds.

%so
%$$
%\E\psi_i = -\psi\{1 + \exp(\eta_{\alpha i})\}  \frac{1}{ 1 + \exp(\eta_{\alpha i})} + \psi[\{1 + \exp(\eta_{\alpha i})\}/\exp(\eta_{\alpha i}))]  \frac{\exp(\eta_{\alpha i})}{ 1 + \exp(\eta_{\alpha i})}.
%$$ 

Next, we consider the quasi--likelihood estimator for the logistic model as defined in Cantoni and Ronchetti (2001, Section 2.2).  The Mallows quasi--likelihood estimator is the solution of the estimating equations,
\begin{equation}
\sum_{i=1}^{n} \bigg[ w(x_{\alpha i}) x_{\alpha i}\frac{1}{v(x_{\alpha i}^T\beta_{\alpha})} h'(x_{\alpha i}^T\beta_{\alpha})\psi_{c}(r_{\alpha i}) - a(\beta) \bigg] = 0,
\end{equation}
where $r_{i} = \{y_{i}-h(x_{\alpha i}^T\beta_{\alpha})\}/v(x_{\alpha i}^T\beta_{\alpha})$ are the Pearson residuals, $\psi_{c}$ is the Huber function defined by
\begin{equation*}
\psi_{c}(r) = \begin{cases} r,          & |r| \leq c,\\
                            c\operatorname{sign}(r),  & |r| >    c, \end{cases}
\end{equation*}
and
\begin{equation*}
a(\beta) = \frac{1}{n} \sum_{i=1}^{n}w(x_{\alpha i}) x_{\alpha i}\frac{1}{v(x_{\alpha i}^T\beta_{\alpha})} h'(x_{\alpha i}^T\beta_{\alpha})\E \psi_{c}(r_{\alpha i}) .
\end{equation*}
When $w(x_{i}) = 1$,  the estimator is called the Huber quasi--likelihood estimator.  In general we do not require that $\psi_{c} = \rho' = \psi$ or that $w_{\alpha i} = w(x_{\alpha i})$. 
Cantoni and Ronchetti (2001, Appendix B) show that the estimator has an asymptotic normal distribution with asymptotic variance $\Sigma_{\alpha} = M_{\alpha}^{-1} Q_{\alpha} M_{\alpha}^{-1}$, where
\begin{equation*}
Q_{\alpha} = \frac{1}{n} X_{\alpha}^{T} A X_{\alpha} - a(\beta) a(\beta)^{T} \quad\text{and}\quad 
M_{\alpha} = \frac{1}{n} X_{\alpha}^{T} B X_{\alpha},
\end{equation*}
with $A$ and $B$ are diagonal matrices with diagonal elements
\begin{eqnarray*}
a_{ii} &=&w(x_{\alpha i})^2 \frac{1}{\sigma^2v(x_{\alpha i}^T\beta_{\alpha})^2} h'(x_{\alpha i}^T\beta_{\alpha})^2\E\psi_{c}(r_{\alpha i})^2, \\
b_{ii} &=& w(x_{\alpha i}) \frac{1}{\sigma v(x_{\alpha i}^T\beta_{\alpha})} h'(x_{\alpha i}^T\beta_{\alpha})^2 \E r_{\alpha i}\psi_{c}(r_{\alpha i}).
\end{eqnarray*}
Using the same generalized inverse as before so that
$$X_{\alpha}X_{\alpha}^{-} = \operatorname{diag}(1,\ldots,1,0,\ldots,0) = E_{n,p_{\alpha}},$$
we have to show that
$\trace( M_{\alpha}^{-1} Q_{\alpha} M_{\alpha}^{-1} \Gamma_{\alpha} )$ is monotone in $p_{\alpha}$. 
Indeed,
\begin{eqnarray*}
\trace( M_{\alpha}^{-1} Q_{\alpha} M_{\alpha}^{-1} \Gamma_{\alpha} )&=& \trace( X^{-}_{\alpha}B^{-1}X_{\alpha}^{-T} X_{\alpha}^{T} A X_{\alpha} X^{-}_{\alpha}B^{-1}X_{\alpha}^{-T} X_{\alpha}^{T}W_{\Gamma_{\alpha}} X_{\alpha} ) \\
&=& \trace( X_{\alpha}X^{-}_{\alpha}B^{-1}X_{\alpha}^{-T} X_{\alpha}^{T} A X_{\alpha} X^{-}_{\alpha}B^{-1}X_{\alpha}^{-T} X_{\alpha}^{T}W_{\Gamma_{\alpha}} ) \\
&=& \trace( E_{n,p_{\alpha}} B^{-1} E_{n,p_{\alpha}} A E_{n,p_{\alpha}} B^{-1} E_{n,p_{\alpha}} W_{\Gamma_{\alpha}} )\\
%&=& \frac{1}{2} \sum_{i=1}^{p_{\alpha}} \sigma_i^{-2} w_{\alpha i} \frac{a_{ii}(h^{'2}_{\alpha i}\E\psi'_{i}- h^{''}_{\alpha i}\E\psi_i)}{b_{ii}^{2}}\\
&=& \frac{1}{2} \sum_{i=1}^{p_{\alpha}}  \frac{a_{ii}(h^{'2}_{\alpha i}\E\psi'_{i}- h^{''}_{\alpha i}\E\psi_i)}{b_{ii}^{2}} 
\end{eqnarray*}
is a monotone function in $p_{\alpha}$.  This function is monotone in $p_{\alpha}$ under the same conditions as the analogous function for maximum likelihood estimation.

\subsection{The reduction of models} 
For any incorrect model  $\alpha\in\mathcal{A}\setminus\mathcal{A}_c$ it follows from condition (vi) in Section 3.1 and from (\ref{eqn asymptotics in M2}) that
\begin{eqnarray}
\liminf_{n\rightarrow\infty} M_{n}(\alpha) > \lim_{n\rightarrow\infty}M_{n}(\alpha_{0}) \quad\text{a.s.} \nonumber
\end{eqnarray}
and for any correct model $\alpha\in\mathcal{A}_c$
\begin{eqnarray}
M_{n}(\alpha)
&=&  \frac{1}{n}\sum_{i=1}^n w_{\alpha i}\rho\{(y_i-h(x_{\alpha i}^T\hat{\beta}_{\alpha}))/\hat{\sigma}_{i}\}  + \frac{\kappa^c}{2m} \trace( \Sigma_{\alpha} \Gamma_{\alpha}) + o_{p}(m^{-1}) \nonumber
\end{eqnarray}
Hence, it follows that for fixed $p_{\alpha_f}$ we also have 
\begin{eqnarray}
\liminf_{n\rightarrow\infty} \min_{\alpha\in\mathcal{A}\setminus\mathcal{A}_c}M_{n}(\alpha) > \lim_{n\rightarrow\infty}\max_{\alpha\in\mathcal{A}_c}M_{n}(\alpha_{0}) \quad\text{a.s.} \label{eqn: model reduction}
\end{eqnarray}

Equation (\ref{eqn: model reduction}) ensures that backward model selection schemes based on $M_n(\alpha)$ maintain consistency for the true model if $\mathcal{A}$ is the set of all possible $2^{p_{\alpha_f}}$ submodels. In particular we suggest using the following backward selection algorithm if the number of submodels to be considered is too large to be dealt with in practical problems.

\medskip \noindent
{\bf Algorithm 3.1.}
\begin{enumerate}
\item Calculate $M_n(\alpha)$ for the full model $\alpha_f =\{1,\ldots,p_{\alpha_f}\}$ and $\alpha_{f,-i} =\{1,\ldots,p_{\alpha_f}\}\setminus\{i\}$, $i=1,\ldots,p_{\alpha_f}$, resulting in $\{M_n(\alpha): \# \alpha \geq p_{\alpha_f}-1\}$.
\item Set $\alpha_f = \argmin_{\{\# \alpha \geq p_{\alpha_f}-1\}} M_n(\alpha)$ and repeat 1.\ if $\alpha_f\geq 2$.
\item Estimate $\alpha$ by the $\argmin$ of $M_n(\alpha)$ over all $1+\sum_{i=1}^{p_{\alpha_f}}i=1+k(k+1)/2$ considered models.
\end{enumerate}

An example of the solution paths of all submodels and of the backward selected submodels is given in Figure \ref{fig1} in Section 5.

\section{Simulation study}

In this section we present a range of simulation results for Poisson regression models. The proposed robust model selection criterion based on robust and non robust parameter estimator procedures with $b=2$ and $\delta(n) = 2\log(n)$ is compared to the AIC and BIC criteria. 

We generated data according to the Poisson regression model
\begin{equation}
\eta_{i} = \log \mu_{i}
= \beta_{1} + \beta_{2}x_{2i} + \beta_{3}x_{3i}  + \beta_{4}x_{4i}, \quad y_{i}\sim Poi(\mu_{i}), \quad i=1,\ldots,n, \label{sim model 1}
\end{equation}
with true parameter vectors $(1,0,0,0)$, $(-1,2,0,0)$, and $(-1,1,1,0)$ such that $\sum_j \beta_j =1$. The response variable is Poisson distributed with mean $\mu_{i}$. The explanatory data is generated by drawing pseudo--random numbers from the multivariate normal with mean vector $(1,1,1)$ and covariance matrix given by diagonal elements equal to 1 and off diagonal elements equal to 0. 

In this non--robust setting we generated for each of the $500$ simulation runs $n=64$ data points and estimated the parameters by $\hat{\beta}_{ML}$ using the \texttt{glm.fit} (ML estimator) and by $\hat{\beta}_{CR}$ using the {\tt glmrob} (Mallows or Huber type robust estimators; see Cantoni and Ronchetti, 2001) function in R.  We calculated  AIC, BIC, and the proposed robust model selection criteria with 8 equally sized strata based on the Pearson residuals from the full model,

The bootstrap estimators for $m =24$ are based on $B=50$ bootstrap samples. Selection probabilities are presented in Table \ref{table 1}. Note that for $500$ simulations the empirical standard deviations for the empirical selection probabilities $\hat{\pi}$ are given by
$sd_{\hat{\pi}}=\sqrt{\hat{\pi}(1-\hat{\pi})/500} < 0.023.$

\begin{center}
\emph{Put Table \ref{table 1} around here.}
\end{center}

In this non--robust simulation the overall performance of the selection criteria $\hat{\alpha}$ is superior to classical criteria such as the AIC and BIC criterion independently of the chosen estimation procedure. As an example consider the results for the true parameter vector $(1,0,0,0)$ where the selection probabilities of the true model using $\hat{\beta}_{ML}$ are $0.58$ for AIC, $0.60$ for BIC, $0.90$ for $\hat{\alpha}$, and using $\hat{\beta}_{CR}$ the estimated probability is $0.89$ for $\hat{\alpha}$.

\bigskip
Next we generated data according to the model in equation (\ref{sim model 1}) but we added 8 moderate outliers in the response for the 8 observations with largest explanatory variable $x_4$. That is if  $\text{rank}(x_{4i}) := \sum_{k=1}^n \Ind(x_{4k}\leq x_{4i})\geq 57$ then $y_{i} \sim Poi(10)$, $i=1,\ldots,n$. All other simulation specifications remain the same. The selection probabilities are presented in Table \ref{table 2}.

\begin{center}
\emph{Put Table \ref{table 2} around here.}
\end{center}

In Table \ref{table 2} we see that the proposed selection criterion used with the robust estimator from Cantoni and Ronchetti (2001) performs outstandingly well. Used with the maximum likelihood estimator, it still performs very well compared to AIC and BIC. As an example consider the results for the true parameter vector $(-1,2,0,0)$. The selection probabilities of the true model using $\hat{\beta}_{ML}$  are  $0.01$ for AIC, 	$0.01$ for BIC, $0.66$ for $\hat{\alpha}$, and using $\hat{\beta}_{CR}$ the selection probability equals $0.78$ for $\hat{\alpha}$. 

\bigskip
Finally, we generated data according to the model in equation (\ref{sim model 1}) but we added 2 influential outliers in the response variable according to the condition $\text{rank}(x_{4i}) \leq 2$ then $y_{i} \sim Poi(100)$, $i=1,\ldots,n$. All other simulation specifications remain the same. The selection probabilities are presented in Table \ref{table 3}.
\begin{center}
\emph{Put Table \ref{table 3} around here.}
\end{center}
Table \ref{table 3} shows that the robust model selection criterion can break down if it is used with $\hat{\beta}_{ML}$ but still perform well with robust parameter estimators. As an example consider the results for the true parameter vector $(-1,1,1,0)$. The selection probabilities of the true model using $\hat{\beta}_{ML}$ equals 0 for AIC, BIC, and $\hat{\alpha}$. On the other hand, using $\hat{\beta}_{CR}$ the estimated probability is $0.71$ for $\hat{\alpha}$.

\section{Real data example}\label{sec real data examples}
In this section we present a real data example on the diversity of arboreal marsupials (possums) in the montane ash forest (Australia) which is part of the {\tt robustbase} package in R ({\tt possumDiv.rda}). The dataset is extensively described by Lindenmayer et al.\ (1990, 1991) and serves as a generalized linear model example with a canonical link function having Poisson distributed responses conditional on the linear predictor (Weisberg and Welsh, 1994; Cantoni and Ronchetti, 2001). The number of of different species ({\tt diversity}, count variable, mean $=1.48$, range $=  0-5$) was observed on $n=151$ sites. The explanatory variables describe the sites in terms of the number of shrubs ({\tt shrubs}, count variable, $5.06$, $0-21$), number of cut stumps from past logging operations ({\tt stumps}, count variable, $0.09$, $0-1$), the number of stags ({\tt stags}, count variable, $7.24$, $0-31$), a bark index ({\tt bark}, ordinal variable, $7.91$, $0-29$), the basal area of acacia species ({\tt acacia}, ordinal variable, $4.83$, $0-10$), a habitat score ({\tt habitat}, ordinal variable, $11.96$, $0-39$), the species of eucalypt with the greatest stand basal area ({\tt eucalypt}, nominal variable, three categories), and the aspect of the site ({\tt aspect}, nominal variable, four categories). We calculate $\hat{\alpha}$ based on $\hat{\beta}_{CR}$ with the same specifications as in the simulation study but because $n$ is considerably larger than $64$ we choose a smaller proportion for the bootstrap. That is $m=40$ which is about $26\%$ of the sample size. The best model according to our criterion $M_n(\alpha)$ includes {\tt stags} and {\tt habitat} which are also selected if the backward selection algorithm in Section 3.4 is applied. The solution paths of $M_n(\alpha)$ is given in Figure 1 which shows the minimal value of $M_n(\alpha)$ for all considered models with the same number of variables. 

%\begin{figure}[h]
%\hspace*{2cm}\includegraphics[height=11cm]{fig1.pdf}
%\caption{Solution path for the minimum of $M_n(\alpha)$ given a fixed number of non zero slope parameters for all submodels (asm) and for backward selected submodels (bsm).}\label{fig1}
%\end{figure}

\begin{figure}[!h]
\begin{center}
\centerline{\includegraphics[width=16cm]{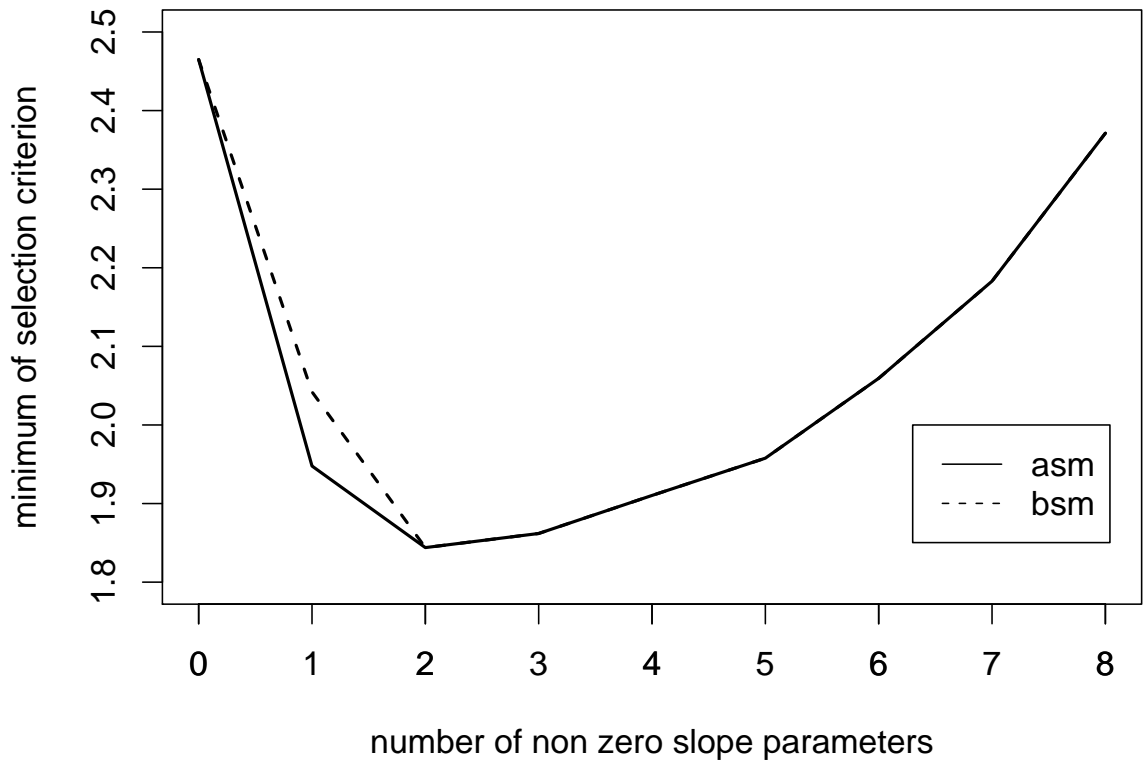}}
\end{center}
\vspace*{-1.5cm}
\hspace*{1.75cm}\begin{minipage}[t]{0.8\linewidth}
\caption{Solution path for the minimum of $M_n(\alpha)$ given a fixed number of non zero slope parameters for all submodels (asm) and for backward selected submodels (bsm).}\label{fig1}\end{minipage}
\end{figure}

Cantoni and Ronchetti (2001) mentioned that there are four potentially influential data points, namely, observations 59, 110, 133, and 139. According to the results of our simulation study we therefore consider $\hat{\alpha}$ together with $\hat{\beta}_{CR}$ to be superior to AIC, BIC, and  $\hat{\alpha}$ with $\hat{\beta}_{ML}$. 
Table \ref{table 4} presents an overview of the estimated best model which includes also the results of Cantoni and Ronchetti (2001, Section 5.2). 

\begin{center}
\emph{Put Table \ref{table 4} around here.}
\end{center}

\section{Discussion and conclusions}
We have proposed a bootstrap criterion for robustly selecting generalized linear models.  The criterion is a generalization of that developed for regression models by M\"uller and Welsh (2005) and has its strengths while still improving on that criterion.  In particular, the criterion (i) combines a robust penalised criterion (which reflects goodness-of-fit to the data) with an estimate of a robust measure of the conditional expected prediction error (which measures the ability to predict as yet unobserved observations), (ii) separates the comparison of models from any particular method of estimating them, and (iii) uses the stratified bootstrap to make the criterion more stable.   The improvement is achieved by using the bootstrap to estimate the bias of the bootstrap estimator of the regression parameter and then using the bias-adjusted bootstrap estimator instead of the raw bootstrap estimator in the criterion.  This step widens the applicability of the method by removing the requirement of M\"uller and Welsh (2005) that the models under consideration include an intercept.  We have also developed a more widely applicable method than that given in M\"uller and Welsh (2005) for establishing that the criterion can be applied with particular robust estimators of the regression parameters.  Our main theoretical result established the asymptotic consistency of the method and the simulation study shows that the model selection method works very well in finite samples.

%\begin{itemize}
%\item generalization from lm models to glm models
%\item use stratified bootstrap in case of outliers 
%\item combining robust penalised criterion and robust conditional expected prediction (not always necessary but sometimes helpful)
%\end{itemize}

%\pagebreak
\thebibliography{99}
\addcontentsline{toc}{section}{\numberline{}References}

%\bibitem{Agostinelly_2002}
%Agostinelli, C. (2002).
%Robust model selection in regression via weighted likelihood methodology.
%\textit{Statistics \& Probability Letters}, {\bf 56}, 289--300.

%\bibitem{Akaike_1973}
%Akaike, H. (1973).
%Information theory and an extension of the maximum likelihood
%principle.
%{In: Petrov, B.N., Cs\'aki, F. (Eds.), \textit{Proceedings of the Second International Symposium of Information Theory}.}
%Akad\'emiai Kiad\'o, Budapest, 267--281.

%\bibitem{Bianco_1996}
%Bianco, A.M and Yohai, V.J. (1996).
%Robust estimation in the logistics regression model.
%{In: Rieder, H. (Ed.), \textit{Robust statistics, data analysis, and computer intensive methods. In honour of Peter Huber's 60th birthday}.}
%Springer, New York, 17--34.

\bibitem{Cantoni_2004}
Cantoni, E. (2004). 
Analysis of robust quasi-deviances for generalized linear models. 
\textit{Journal of Statistical Software}, {\bf 10}, Issue 4.

\bibitem{Cantoni_2001} 
Cantoni, E. and Ronchetti, E. (2001). 
Robust inference for generalized linear models.
\textit{Journal of the American Statistical Association}, {\bf 96}, 1022--1030.

\bibitem{Cantoni_2005}
Cantoni, E., Field, C., Mills Flemming, J. and Ronchetti, E. (2007). 
Longitudinal variable selection by cross-validation in the case of many covariates.
\textit{Statistics in Medicine}, {\bf 26}, 919--930.

\bibitem{Cantoni_2005b} 
Cantoni, E., Mills Flemming, J. and Ronchetti, E. (2005). 
Variable selection for marginal longitudinal generalized linear models.
\textit{Biometrics}, {\bf 61}, 507--513.

%\bibitem{Carroll_1988}
%Carroll, R.J., Welsh, A.H. (1988).
%A note on asymmetry and robustness in linear regression.
%\textit{American Statistician} {\bf 42}, 285--287.

%\bibitem{Croux_2003} 
%Croux, C. and Haesbroeck (2003). 
%Implementing the Bianco and Yohai estimator for logistic regression.
%\textit{Computational Statistics and Data Analysis}, {\bf 44}, 273--295.

%\bibitem{Geyer_2003} 
%Geyer, C.J. (2003). 
%Generalized linear models in R.
%\textit{Handout for Statistics 5931}, University of Minnesota.

%\bibitem{Graybill_1983}
%Graybill, F.A. (1983).
%\textit{Matrices with aplication in statistics. 2nd ed}.
%The Wadsworth Statistics/Probability Series, Belmont.

%\bibitem{Graybill_1957}
%Graybill, F.A. and Marsaglia, G. (1957). 
%Idempotent matrices and quadratic forms in the general linear hypothesis.
%\textit{The Annals of Mathematical Statistics}, {\bf 28}, 678--686.

%\bibitem{Hampel_1983}
%Hampel, F.R. (1983).
%Some aspects of model choice in robust statistics.
%\textit{Proceedings of the 44th Session of the ISI}, Book 2, 767--771.

\bibitem{Hurvich_1995}
Hurvich, C.M. and Tsai, C.-L. (1995). 
Model selection for extended quasi-likelihood models in small samples.
\textit{Biometrics}, {\bf 51}, 1077--1084.

\bibitem{Kunsch_1989}
K\"unsch, H.R., Stefanski, L.A. and Carroll, R.J. (1989).
Conditionally unbiased bounded--influence estimation in general regression models, with applications to generalized linear models.
\textit{Journal of the American Statistical Association}, {\bf 84}, 460--466.

\bibitem{Liang_1986}
Liang, K.-Y. and Zeger, S.L. (1986).
Longitudinal data analysis using generalized linear models.
\textit{Biometrika}, {\bf 73}, 13--22.

\bibitem{Lindenmayer_1991}
Lindenmayer, D.B., Cunningham, R.B., Tanton, M.T., Nix, H.A. and Smith, A.P. (1991).
The conservation of arboreal marsupials in the Montane ash forests of the central highlands of Victoria, South-East Australia: III. The habitat requirements of Leadbeater's possum \textit{Gymnobelideus leadbeateri} and models of the diversity and abundance of arboreal marsupials.
\textit{Biological Conservation}, {\bf 56}, 295--315.

\bibitem{Lindenmayer_1990}
Lindenmayer, D.B., Cunningham, R.B., Tanton, M.T., Smith, A.P. and Nix, H.A. (1990).
The conservation of arboreal marsupials in the Montane ash forests of the central highlands of Victoria, South-East Australia: I. Factors influencing the occupancy of trees with hollows.
\textit{Biological Conservation}, {\bf 54}, 111--131.

%\bibitem{Mallows_1973}
%Mallows, C.L. (1973).
%Some comments on Cp.
%\textit{Technometrics}, {\bf 15}, 661--675.

\bibitem{McCullagh_1989}
McCullagh, P. and Nelder, J.A. (1989).
\textit{Generalized Linear Models}, 2nd edition.
Chapman \& Hall/CRC, London.

%\bibitem{Miller_2002}
%Miller, A. (2002).
%\textit{Subset selection in regression}.
%Chapman \& Hall/CRC, London.

\bibitem{Mueller_2005}
M\"uller, S. and Welsh, A.H. (2005).
Outlier robust model selection in linear regression.
\textit{Journal of the American Statistical Association}, {\bf 100}, 1297--1310.

\bibitem{Pan_2001}
Pan, W. (2001). 
Akaike's information criterion in generalized estimating equations.
\textit{Biometrics}, {\bf 57}, 120--125.

%\bibitem{Pan_2001b}
%Pan, W., Le, C.T. (2001). 
%Bootstrap selection in generalized linear models.
%\textit{Journal of Agricultural, Biological, and Environmental Statistics}, {\bf 6}, 49--61.

%\bibitem{Phelps_1982}
%Phelps, K. (1982).
%Use of the complementary log--log function to describe dose--response relationships in insecticide evaluation field trials.
%{In: Gilchrist, R. (Ed.), \textit{Lecture Notes in Statistics, 14. GLIM.82: Proceedings of the International Conference on Generalized Linear Models}.}
%Springer, New York.

\bibitem{Preisser_1999}
Preisser, J.S. and Qaqish, B.F. (1999). 
Robust regression for clustered data with application to binary  responses. \textit{Biometrics}, {\bf 55}, 574--579. 

\bibitem{Qian_2002}
Qian, G. and Field, C. (2002).
Law of iterated logarithm and consistent model selection criterion in logistic regression.
\textit{Statistics \& Probability Letters}, {\bf 56}, 101--112.

%\bibitem{Qian_1996}
%Qian, G., Gabor, F. and Gupta, R.P. (1996).
%Generalised linear model selection by the predictive least quasi--deviance criterion.
%\textit{Biometrika}, {\bf 83}, 41--54.

%\bibitem{Qian_1998}
%Qian, G. and K\"unsch, H.R. (1998).
%On model selection via stochastic complexity in robust linear regression.
%\textit{Journal of Statistical Planning and Inference}, {\bf 75}, 91--116.

%\bibitem{Ronchetti_1985}
%Ronchetti, E. (1985).
%Robust model selection in regression.
%\textit{Statistics \& Probability Letters}, {\bf 3}, 21--23.

%\bibitem{Ronchetti_1997a}
%Ronchetti, E. (1997).
%Robustness aspects of model choice.
%\textit{Statistica Sinica}, {\bf 7}, 327--338.

%\bibitem{Ronchetti_1994}
%Ronchetti, E. and Staudte, R.G. (1994).
%A robust version of Mallows's $C_P$.
%\textit{Journal of the American Statistical Association}, {\bf 89}, 550--559.

%\bibitem{Ronchetti_1997b}
%Ronchetti, E., Field, C. and Blanchard, W. (1997).
%Robust linear model selection by cross-validation.
%\textit{Journal of the American Statistical Association}, {\bf 92}, 1017--1023.

\bibitem{Ruckstuhl_2001}
Ruckstuhl, A.F. and Welsh, A.H. (2001).
Robust fitting of the binomial model.
\textit{Annals of Statistics}, {\bf 29}, 1117--1136.

\bibitem{Schwarz_1978}
Schwarz, G. (1978).
Estimating the dimension of a model.
\textit{Annals of Statistics}, {\bf 6}, 461--464.

%\bibitem{Shao_1992}
%Shao, J. (1992).
%Asymptotic theory in generalized linear models with nuisance scale parameters.
%\textit{Probability Theory and Related Fields}, {\bf 91}, 25--41.

%\bibitem{Shao_1993}
%Shao, J. (1993).
%Linear model selection by cross-validation.
%\textit{Journal of the American Statistical Association} {\bf 88}, 486--494.

\bibitem{Shao_1996}
Shao, J. (1996).
Bootstrap model selection.
\textit{Journal of the American Statistical Association}, {\bf 91}, 655--665.

%\bibitem{Shao_1997}
%Shao, J. (1997).
%An asymptotic theory for linear model selection (with discussion).
%\textit{Statistica Sinica} {\bf 7}, 221--264.

%\bibitem{Sommer_1996}
%Sommer, S. and Huggins, R.M. (1996).
%Variable selection using the Wald test and a robust $C_P$.
%\textit{Applied Statistics}, {\bf 45}, 15--29.

%\bibitem{Sommer_1995}
%Sommer, S. and Staudte, R.G. (1995).
%Robust variable selection in regression in the presence of outliers and leverage points.
%\textit{Australian Journal of Statistics}, {\bf 37}, 323--336.

\bibitem{Weisberg_1994}
Weisberg, S. and Welsh, A.H. (1994).
Adapting for the missing link.
\textit{Annals of Statistics}, {\bf 22}, 1674--1700.

%\bibitem{Williams_1987}
%Williams, D.A. (1987).
%Generalized linear model diagnostics using the deviance and single case deletions.
%\textit{Applied Statistics}, {\bf 36}, 181--191.

%
%\bibitem{Wisnowski_2003}
%Wisnowski, J.W., Simpson, J.R., Montgomery, D.C. and Runger, G.C. (2003).
%Resampling methods for variable selection in robust regression.
%\textit{Computational Statistics \& Data Analysis}, {\bf 43}, 341--355.

%\bibitem{Zhang_1997}
%Zhang, P. (1997).
%Generalised accumulated prediction error and model selection for categorical panel data.
%\textit{Statistica Sinica} {\bf 7}, 527--544.

\newpage
\begin{table}[ht]
\caption{Estimated selection probabilities based on the maximum--likelihood estimator $\hat{\beta}_{ML}$ and on the robust estimator $\hat{\beta}_{CR}$ from Cantoni and Ronchetti (2001). The results are based on $500$ Monte Carlo simulations and the bootstrap is based on $B=50$ replications. The data has no outlying points.} \label{table 1}
\begin{center}
\begin{tabular}{ccc|ccc|c}
\hline
& & &\multicolumn{3}{c|}{$\hat{\beta}_{ML}$} & $\hat{\beta}_{CR}$ \\
true $\beta^T$ & model & type & AIC & BIC &  $\hat{\alpha}^{s_{8}}_{m,n}$ &  $\hat{\alpha}^{s_{8}}_{m,n}$ \\
\hline
$(1,0,0,0)$ & $(\beta_{1},0,0,0)$                   			& $\alpha_{0}$         & 0.58	&	0.60	&	0.90	&	0.89  \\
                 & $(\beta_{1},\beta_{2},0,0)$                   	& $\mathcal{A}_c$	& 0.10	&	0.10	&	0.02	&	0.03  \\
                 & $(\beta_{1},0,\beta_{3},0)$                   	& $\mathcal{A}_c$	& 0.13	&	0.13	&	0.04	&	0.05 \\
                 & $(\beta_{1},0,0,\beta_{4})$                   	& $\mathcal{A}_c$	& 0.13	&	0.12	&	0.03	&	0.03 \\
                 & $(\beta_{1},\beta_{2},\beta_{3},0)$             	& $\mathcal{A}_c$	& 0.03	&	0.02	&	0.00	&	0.00  \\
                 & $(\beta_{1},\beta_{2},0,\beta_{4})$             	& $\mathcal{A}_c$	& 0.02	&	0.02	&	0.00	&	0.00 \\
                 & $(\beta_{1},0,\beta_{3},\beta_{4})$             	& $\mathcal{A}_c$	& 0.02	&	0.02	&	0.00	&	0.00  \\
                 & $(\beta_{1},\beta_{2},\beta_{3},\beta_{4})$  	& $\mathcal{A}_c$	& 0.00	&	0.00	&	0.00	&	0.00  \\
\hline
$(-1,2,0,0)$ & $(\beta_{1},0,0,0)$                   			& --                      	& 0.00	&	0.00	&	0.00	&	0.00  \\
                 & $(\beta_{1},\beta_{2},0,0)$                   	& $\alpha_{0}$         & 0.65	&	0.67	&	0.94	&	0.93  \\
                 & $(\beta_{1},0,\beta_{3},0)$                   	& --                      	& 0.00	&	0.00	&	0.00	&	0.00 \\
                 & $(\beta_{1},0,0,\beta_{4})$                   	& --                      	& 0.00	&	0.00	&	0.00	&	0.00 \\
                 & $(\beta_{1},\beta_{2},\beta_{3},0)$             	& $\mathcal{A}_c$	& 0.15	&	0.15	&	0.03	&	0.03  \\
                 & $(\beta_{1},\beta_{2},0,\beta_{4})$             	& $\mathcal{A}_c$	& 0.17	&	0.16	&	0.03	&	0.03 \\
                 & $(\beta_{1},0,\beta_{3},\beta_{4})$             	& --	      			& 0.00	&	0.00	&	0.00	&	0.00  \\
                 & $(\beta_{1},\beta_{2},\beta_{3},\beta_{4})$  	& $\mathcal{A}_c$	& 0.03	&	0.02	&	0.00	&	0.00  \\
\hline
$(-1,1,1,0)$ & $(\beta_{1},0,0,0)$                   		 	& --                      	& 0.00	&	0.00	&	0.00	&	0.00 \\
                 & $(\beta_{1},\beta_{2},0,0)$                   	& --                      	& 0.00	&	0.00	&	0.00	&	0.00  \\
                 & $(\beta_{1},0,\beta_{3},0)$                   	& --                      	& 0.00	&	0.00	&	0.05	&	0.07  \\
                 & $(\beta_{1},0,0,\beta_{4})$                   	& --                      	& 0.00	&	0.00	&	0.00	&	0.00  \\
                 & $(\beta_{1},\beta_{2},\beta_{3},0)$             	& $\alpha_{0}$      	& 0.81	&	0.82	&	0.91	&	0.89 \\
                 & $(\beta_{1},\beta_{2},0,\beta_{4})$             	& --                      	& 0.00	&	0.00	&	0.00	&	0.00 \\
                 & $(\beta_{1},0,\beta_{3},\beta_{4})$             	& --                      	& 0.00	&	0.00	&	0.00	&	0.00 \\
                 & $(\beta_{1},\beta_{2},\beta_{3},\beta_{4})$  	& $\mathcal{A}_c$	& 0.19	&	0.18	&	0.03	&	0.04  \\
\end{tabular}
\end{center}
\end{table}

\newpage
\begin{table}[ht]
\caption{Estimated selection probabilities in the presence of outliers based on the maximum--likelihood estimator and on the robust estimator from Cantoni and Ronchetti (2001). The results are based on $500$ Monte Carlo simulations and the bootstrap is based on $B=50$ replications. The data has 8 moderately outlying points.} \label{table 2}
\begin{center}
\begin{tabular}{ccc|ccc|c}
\hline
& & &\multicolumn{3}{c|}{$\hat{\beta}_{ML}$} & $\hat{\beta}_{CR}$ \\
true $\beta^T$ & model & type & AIC & BIC &  $\hat{\alpha}^{s_{8}}_{m,n}$ &  $\hat{\alpha}^{s_{8}}_{m,n}$ \\
\hline 
$(1,0,0,0)$ & $(\beta_{1},0,0,0)$                   			& $\alpha_{0}$         & 0.41	&	0.42	&	0.94	&	0.94  \\
                 & $(\beta_{1},\beta_{2},0,0)$                   	& $\mathcal{A}_c$	& 0.12	&	0.12	&	0.02	&	0.02  \\
                 & $(\beta_{1},0,\beta_{3},0)$                   	& $\mathcal{A}_c$	& 0.07	&	0.07	&	0.02	&	0.02 \\
                 & $(\beta_{1},0,0,\beta_{4})$                   	& $\mathcal{A}_c$	& 0.25	&	0.24	&	0.03	&	0.02 \\
                 & $(\beta_{1},\beta_{2},\beta_{3},0)$             	& $\mathcal{A}_c$	& 0.04	&	0.04	&	0.00	&	0.00  \\
                 & $(\beta_{1},\beta_{2},0,\beta_{4})$             	& $\mathcal{A}_c$	& 0.06	&	0.05	&	0.00	&	0.00 \\
                 & $(\beta_{1},0,\beta_{3},\beta_{4})$             	& $\mathcal{A}_c$	& 0.05	&	0.05	&	0.00	&	0.00  \\
                 & $(\beta_{1},\beta_{2},\beta_{3},\beta_{4})$  	& $\mathcal{A}_c$	& 0.01	&	0.01	&	0.00	&	0.00  \\
\hline
$(-1,2,0,0)$ & $(\beta_{1},0,0,0)$                   			& --                      	& 0.00	&	0.00	&	0.00	&	0.00  \\
                 & $(\beta_{1},\beta_{2},0,0)$                   	& $\alpha_{0}$         & 0.01	&	0.01	&	0.66	&	0.78  \\
                 & $(\beta_{1},0,\beta_{3},0)$                   	& --                      	& 0.00	&	0.00	&	0.00	&	0.00 \\
                 & $(\beta_{1},0,0,\beta_{4})$                   	& --                      	& 0.00	&	0.00	&	0.00	&	0.00 \\
                 & $(\beta_{1},\beta_{2},\beta_{3},0)$             	& $\mathcal{A}_c$	& 0.00	&	0.00	&	0.01	&	0.02  \\
                 & $(\beta_{1},\beta_{2},0,\beta_{4})$             	& $\mathcal{A}_c$	& 0.79	&	0.80	&	0.33	&	0.20 \\
                 & $(\beta_{1},0,\beta_{3},\beta_{4})$             	& --	      			& 0.00	&	0.00	&	0.00	&	0.00  \\
                 & $(\beta_{1},\beta_{2},\beta_{3},\beta_{4})$  	& $\mathcal{A}_c$	& 0.20	&	0.19	&	0.01	&	0.01  \\
\hline
$(-1,1,1,0)$ & $(\beta_{1},0,0,0)$                   			& --                      	& 0.00	&	0.00	&	0.00	&	0.00 \\
                 & $(\beta_{1},\beta_{2},0,0)$                   	& --                      	& 0.00	&	0.00	&	0.01	&	0.01  \\
                 & $(\beta_{1},0,\beta_{3},0)$                   	& --                      	& 0.00	&	0.00	&	0.02	&	0.07  \\
                 & $(\beta_{1},0,0,\beta_{4})$                   	& --                      	& 0.00	&	0.00	&	0.00	&	0.00  \\
                 & $(\beta_{1},\beta_{2},\beta_{3},0)$             	& $\alpha_{0}$      	& 0.00	&	0.00	&	0.55	&	0.73 \\
                 & $(\beta_{1},\beta_{2},0,\beta_{4})$             	& --                      	& 0.00	&	0.00	&	0.04	&	0.02 \\
                 & $(\beta_{1},0,\beta_{3},\beta_{4})$             	& --                      	& 0.00	&	0.00	&	0.05	&	0.03 \\
                 & $(\beta_{1},\beta_{2},\beta_{3},\beta_{4})$  	& $\mathcal{A}_c$	& 0.99	&	0.99	&	0.34	&	0.13  \\
\end{tabular}
\end{center}
\end{table}
\clearpage

\newpage
\begin{table}[ht]
\caption{Estimated selection probabilities in the presence of outliers based on the maximum--likelihood estimator and on the robust estimator from Cantoni and Ronchetti (2001). The results are based on $500$ Monte Carlo simulations and the bootstrap is based on $B=50$ replications. The data has 2 strongly outlying points.} \label{table 3}
\begin{center}
\begin{tabular}{ccc|ccc|c}
\hline
& & &\multicolumn{3}{c|}{$\hat{\beta}_{ML}$} & $\hat{\beta}_{CR}$ \\
true $\beta^T$ & model & type & AIC & BIC &  $\hat{\alpha}^{s_{8}}_{m,n}$ &  $\hat{\alpha}^{s_{8}}_{m,n}$ \\
\hline
$(1,0,0,0)$ & $(\beta_{1},0,0,0)$                   			& $\alpha_{0}$         & 0.00	&	0.00	&	0.03	&	0.97  \\
                 & $(\beta_{1},\beta_{2},0,0)$                   	& $\mathcal{A}_c$	& 0.00	&	0.00	&	0.00	&	0.01  \\
                 & $(\beta_{1},0,\beta_{3},0)$                   	& $\mathcal{A}_c$	& 0.00	&	0.00	&	0.02	&	0.01 \\
                 & $(\beta_{1},0,0,\beta_{4})$                   	& $\mathcal{A}_c$	& 0.00	&	0.00	&	0.04	&	0.00 \\
                 & $(\beta_{1},\beta_{2},\beta_{3},0)$             	& $\mathcal{A}_c$	& 0.00	&	0.00	&	0.00	&	0.00  \\
                 & $(\beta_{1},\beta_{2},0,\beta_{4})$             	& $\mathcal{A}_c$	& 0.00	&	0.00	&	0.00	&	0.00 \\
                 & $(\beta_{1},0,\beta_{3},\beta_{4})$             	& $\mathcal{A}_c$	& 0.05	&	0.05	&	0.88	&	0.00  \\
                 & $(\beta_{1},\beta_{2},\beta_{3},\beta_{4})$  	& $\mathcal{A}_c$	& 0.95	&	0.95	&	0.04	&	0.00  \\
\hline
$(-1,2,0,0)$ & $(\beta_{1},0,0,0)$                   			& --                      	& 0.00	&	0.00	&	0.00	&	0.00  \\
                 & $(\beta_{1},\beta_{2},0,0)$                   	& $\alpha_{0}$         & 0.00	&	0.00	&	0.17	&	0.99  \\
                 & $(\beta_{1},0,\beta_{3},0)$                   	& --                      	& 0.00	&	0.00	&	0.00	&	0.00 \\
                 & $(\beta_{1},0,0,\beta_{4})$                   	& --                      	& 0.00	&	0.00	&	0.00	&	0.00 \\
                 & $(\beta_{1},\beta_{2},\beta_{3},0)$             	& $\mathcal{A}_c$	& 0.00	&	0.00	&	0.04	&	0.01  \\
                 & $(\beta_{1},\beta_{2},0,\beta_{4})$             	& $\mathcal{A}_c$	& 0.00	&	0.00	&	0.15	&	0.00 \\
                 & $(\beta_{1},0,\beta_{3},\beta_{4})$             	& --	      			& 0.00	&	0.00	&	0.00	&	0.00  \\
                 & $(\beta_{1},\beta_{2},\beta_{3},\beta_{4})$  	& $\mathcal{A}_c$	& 1.00	&	1.00	&	0.63	&	0.00  \\
\hline
$(-1,1,1,0)$ & $(\beta_{1},0,0,0)$                   		& --                      	& 0.00	&	0.00	&	0.00	&	0.01 \\
                 & $(\beta_{1},\beta_{2},0,0)$                   	& --                      	& 0.00	&	0.00	&	0.05	&	0.06  \\
                 & $(\beta_{1},0,\beta_{3},0)$                   	& --                      	& 0.00	&	0.00	&	0.00	&	0.22  \\
                 & $(\beta_{1},0,0,\beta_{4})$                   	& --                      	& 0.00	&	0.00	&	0.04	&	0.00  \\
                 & $(\beta_{1},\beta_{2},\beta_{3},0)$             	& $\alpha_{0}$      	& 0.00	&	0.00	&	0.00	&	0.71 \\
                 & $(\beta_{1},\beta_{2},0,\beta_{4})$             	& --                      	& 0.02	&	0.02	&	0.88	&	0.00 \\
                 & $(\beta_{1},0,\beta_{3},\beta_{4})$             	& --                      	& 0.00	&	0.00	&	0.00	&	0.00 \\
                 & $(\beta_{1},\beta_{2},\beta_{3},\beta_{4})$  	& $\mathcal{A}_c$	& 0.98	&	0.98	&	0.03	&	0.00  \\
\end{tabular}
\end{center}
\end{table}
\clearpage

\newpage
\begin{table}[ht]
\caption{Estimated best model for the Lindenmayer et al. (1990, 1991)  data using a range of model selection procedures.} \label{table 4}
\begin{center}
\begin{tabular}{c|c|l}
\hline
selection criterion & $\hat{\beta}$ & selected variables in the best model \\
\hline
$\hat{\alpha}$  & $\hat{\beta}_{CR}$ & stags, habitat \\
$\hat{\alpha}$  & $\hat{\beta}_{ML}$ & stags, habitat \\
AIC  & $\hat{\beta}_{ML}$ & stags, bark, acacia, habitat, aspect \\
BIC  & $\hat{\beta}_{ML}$ & stags, bark, acacia, aspect \\
$p$-value forward stepwise &$\hat{\beta}_{CR}$ & stags, bark, acacia, habitat, aspect \\
$p$-value forward stepwise& $\hat{\beta}_{ML}$ & stags, bark, acacia, habitat, aspect \\
\end{tabular}
\end{center}
\end{table}
\clearpage

\end{document}